\def\tr {\operatorname{tr}}
\def\fat{\mathbf}
\def\bra#1{\langle #1 |}
\def\ket#1{| #1 \rangle}
\def\no {\noindent}
\def\k  {\fat k}
\def\p  {\fat p}
\def\x  {\fat x}
\def\H  {\fat H}
\def\aa  {\fat a}
\def\hannover {Institut f\"ur Theoretische Physik, Leibniz Universit\"at Hannover, Hannover, Niedersachsen, Germany}
\newcommand{\mc}{\mathcal}
\newcommand{\mr}{\mathrm}
\newcommand{\mb}{\mathbb}
\newcounter{counter}
\newtheorem{cor}[counter]{Corollary}
\newtheorem{theorem}[counter]{Theorem}
\newtheorem*{assumptions}{Assumptions}
\newtheorem{proposition}{Proposition}
\newtheorem{definition}{Definition}
\begin{document}

\title{Optimal uncertainty relations in a modified Heisenberg algebra}

\author{Kais Abdelkhalek}\email[\ ]{kais.abdelkhalek@itp.uni-hannover.de}
\address{\hannover}

\author{Wissam Chemissany}\email[\ ]{wissam.chemissany@itp.uni-hannover.de, wissamch@mit.edu}
\address{\hannover}
\address{Research Laboratory of Electronics, Massachusetts Institute of Technology, Cambridge, Massachusetts, USA}

\author{Leander Fiedler} \email[\ ]{leander.fiedler@itp.uni-hannover.de}
\address{\hannover}

\author{Gianpiero Mangano}\email[\ ]{mangano@na.infn.it}
\address{INFN, Sezione di Napoli, Complesso Univ. Monte S. Angelo, Napoli, Italy}

\author{René Schwonnek} \email[\ ]{rene.schwonnek@itp.uni-hannover.de}
\address{\hannover}

\begin{abstract}
Various theories that aim at unifying gravity with quantum mechanics suggest modifications of the Heisenberg algebra for position and momentum. From the perspective of quantum mechanics, such modifications lead to new uncertainty relations which are thought (but \emph{not}  proven) to imply the existence of a minimal observable length. Here we prove this statement in a framework of sufficient physical and structural assumptions. Moreover, we present a general method that allows to formulate optimal and state-independent variance-based uncertainty relations. In addition, instead of variances, we make use of entropies as a measure of uncertainty and provide uncertainty relations in terms of min- and Shannon entropies. We compute the corresponding entropic minimal lengths and find that the minimal length in terms of min-entropy is exactly one bit.
\end{abstract}

\maketitle

%%%%%%%%%%%%%%%%%%%%%%%%%%%%%%%%%%%%%%%%%%%%%%%%%%%%%%%

\section{Introduction}
A considerable amount of efforts has been devoted to reconcile gravity with quantum mechanics, but the conventional field theoretic avenues for quantizing general relativity have suffered issues in renormalisability. Several theories such as string theory have suggested that the sought-after quantum gravity has to be effectively cut off in the ultraviolet, leading to the notion of \emph{minimal length} \cite{Amati:1988tn,garay1995quantum,kempf1995hilbert} (see  also \cite{hossenfelder2013minimal} and references therein). In other words, the gravitational effects become significantly important upon probing physics at an energy scale as large as the Planck scale.
Such a nontrivial premise of the minimal position uncertainty has been corroborated by string theoretic arguments \cite{Amati:1988tn,Konishi:1989wk}, leading to the so-called generalized uncertainty principle.

There had been a consensus within the high energy physics community that such a minimal length has a quantum mechanical origin which should effectively be formulated in the form of a non-zero minimal uncertainty for a position measurement. In its simplest version, this can be obtained by explicitly constructing position and momentum operators $\x$ and $\p$ that satisfy a deformed Heisenberg algebra
\begin{align}
[\x,\p]=i \hbar f(\p),
\label{modalg0}
\end{align}
where the precise form of the modification $f(\p)$ depends on which theory and approach is used \cite{Amati:1988tn,Konishi:1989wk,Kempf:1993bq,connes}, see also e.g. \cite{Moyal:1949sk, Groenewold:1946kp, Doplicher:1994tu} for deformations in configuration space. Recently, a deformation of (\ref{modalg0}) where the r.h.s. is assumed to be a stochastic gaussian variable has been also considered \cite{Mangano:2015pha}. Large parts of related literature focuses on the case where $f(\p)=1+\beta \p^2$ and uncertainty is measured in terms of variances. This specific modification corresponds to the first term in a Taylor expansion in $\p^2$ and reflects the expected simplest deviation from the standard case. In this work we allow modifications that satisfy a number of physically well-motivated assumptions (see Section \ref{heisenalg}) and are otherwise completely general.

Despite the substantial understanding that had been gained in previous approaches, there still exist conceptual shortcomings in the study of the origin of minimal lengths. Firstly, the term ``minimal length'' should not be interpreted as an actual geometric length. Instead, minimal length refers to the perception that, in a modified algebra, the probability distributions obtained in a position measurement cannot become arbitrarily sharp. Hence, minimal length should rather be dubbed and always be understood as ``minimal position uncertainty''.

\begin{figure}[t]
	\centering \def\svgwidth{0.45\textwidth}
	\input{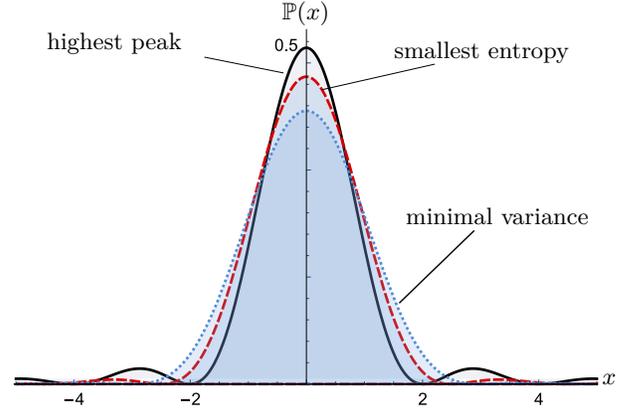}
	\caption{
		Probability distributions of a position measurement on quantum states suspected to a UV cut-off (see  \eqref{optstates}). Which one is the ``sharpest''?
		All three distributions attain a minimal length in their own sense:
		the blue, dotted one attains minimal length in terms of variances as typically considered in the literature. The black, solid distribution has the highest peak and the red, dashed distribution has the smallest entropy. The latter two distributions have infinite variance, although they appear to be localised to some extent.}
	\label{posprob}
\end{figure}

Secondly, if minimal length is to be understood as an immediate consequence of modifying the Heisenberg algebra, it is important to show that {\it all} possible pairs of operators $\x$ and $\p$ that satisfy this algebra lead to a non-zero minimal position uncertainty. If such a statement is not correct in all its generality, what assumptions are needed besides modifying the algebra to prove minimal length? In the standard case this question is answered by the Stone-von Neumann theorem, in the present context the situation is not at all clear. Most work related to the study of minimal length focused only on showing the mere existence of such operators neglecting such uniqueness considerations\footnote{While there have been studies \cite{kempf1995hilbert,Kempf:1997pb} that tackle the uniqueness of self-adjoint extensions of the position operator given a fixed representation of position and momentum, we discuss the uniqueness of representations of a modified momentum operator as a function of the unmodified momentum operator.}.

Thirdly, no general method to compute optimal and state-independent uncertainty relations for a given modification $f(\p)$ has been developed so far. While such relations directly yield minimal length, they are of scientific interest on their own, since they express the influence of the modification on all states. For example, experimental proposals like \cite{Pikovski:2011zk} aim at observing a modified uncertainty relation in an uncertainty regime where minimal length cannot be attained.

As a last point, characterizing minimal length in terms of variances is at least controversial: on one hand variances characterise well the uncertainty for most unimodal distributions, especially if they are Gaussian. On the other hand, variances of multimodal distributions are known to show strange and unwanted behaviour if interpreted as a measure of uncertainty (as is done for minimal lengths). There is no reason why a position distribution should in general be unimodal, especially since non-zero minimal length immediately implies that Gaussian states are not part of the considered Hilbert space. Hence, in most cases variances are not a good candidate to capture the notion of minimal length as can also be seen from Fig.~\ref{posprob}. In Section~\ref{sec:entrops} we discuss this in more detail.

Which measure to use instead is far from unique and depends on the operational task that shall be accomplished. Entropies as a measure of uncertainty have been proven tremendously useful in various fields, such as quantum information theory (see e.g. \cite{Wilde}), quantum thermodynamics (see \cite{Goold:2015} for a survey), or quantum gravity (for recent work and references see \cite{carroll2016entropy}). For example, the uncertainty principle has been made operationally precise in the form of entropic uncertainty relations which for instance are an essential part of security proofs of quantum cryptographic protocols (e.g. \cite{Furrer}) and are still focus of much investigation \cite{dammeier2015uncertainty, Busch:2013vba,abdelkhalek2015optimality}, see also the reviews \cite{coles2015entropic,WehnerWinter}.
It thus seems beneficial to formulate entropic uncertainty relations in the context of modified Heisenberg algebras thereby introducing the concept of {\it entropic minimal length}. We investigate its implications compared to those obtained by its variance-based counterpart, see \cite{Pye:2015tta}.

In this work we develop the underlying quantum mechanical setting in which one can study the {\it direct} consequences of modifying the Heisenberg algebra in view of the existence of non-zero minimal lengths. Then, temporarily complying with the consensus to formulate minimal length in terms of variances, we provide a general framework from which optimal and state-independent uncertainty relations and minimal lengths can be calculated efficiently.
Here, the term ``optimal'' refers to Pareto optimality, which originates in the theory of optimisation \cite{Pareto}.
We will also argue why a typical approach that invokes equality in the Robertson-Kennard relation \eqref{rob} does not provide an optimal uncertainty relation. Lastly, we compute and discuss implications and advantages of an entropic formulation of minimal length. In particular, we introduce minimal length in terms of the Shannon entropy (or differential entropy in the continuous setting) and show that both the minimal length value and the corresponding minimizing states are not equivalent to those obtained for the ``standard'' minimal length in terms of variances. We also discuss how a further feature appears when using entropies, namely a maximal entropy in momentum space. Finally, we compute minimal length in terms of the min-entropy, which quantifies the maximum probability of correctly predicting the outcome of a position measurement. Intriguingly, we find an intimate connection between variance-based and min-entropy based minimal length: for scenarios with normalised variance-based minimal length, the mininal length in terms of min-entropy is also normalised, meaning that the best possible localisation of space is exactly one bit.

\par\noindent
\section{Representation of the modified Heisenberg Algebra}
\label{heisenalg}
% !TEX root = Letter-GUPWPDR1-GP.tex
Let us consider the position operator $\x$, i.e. the multiplication operator
on the Hilbert space $\mathcal{H}:=\mathcal{L}^2(\mathbb{R})$. In this section we characterise properties of momentum operators $\p$ that satisfy the modified Heisenberg algebra \eqref{modalg0}. After briefly summarising and unifying previous constructions that showed the {\it existence} of operators $\p$ that lead to minimal lengths effects, we present our main result of this section that proves the {\it uniqueness} of such constructions.

More concretely, we aim at characterising a linear, self-adjoint operator $\p$ with dense domain in a closed subspace $\mathcal{P}$ of a Hilbert space $\mathcal{H}$ with spectrum coinciding with $\mathbb{R}$ that satisfies the modified Heisenberg algebra\footnote{Note that in the following $\fat x$ is always the usual position operator on $\mc H$ in contrast to the approach taken in \cite{kempf1997minimal,spindel}. However, we can obtain the form of the position operators in these references, if we restrict $\fat x$ to $\mc P$ and then choose suitable self-adjoint extensions.} (we set $\hbar=1$ in the following)
\begin{align}
	[\x,\p]=if(\p) \ ,
	\label{modalg}
\end{align}
where $f:\mathbb{R}\rightarrow \mathbb{R}$ satisfies
\begin{itemize}
	\item[(i)] $f(0)=1 \ ,$
	\item[(ii)] $f(p)=f(-p)$ for all $p\in\mathbb{R}\ , $
	\item[(iii)] $f(p)$ is convex on $\mathbb{R}^+$, i.e. $\forall p,p'\geq0$: \\
	$f(\lambda p + (1-\lambda) p')\leq \lambda f(p) + (1-\lambda) f(p')$.
\end{itemize}
Assumption (i) ensures that for small momentum we retrieve the original unmodified Heisenberg algebra, while assumption (ii) translates to the statement that momentum should not have a preferred direction. The last assumption (iii) is a generalisation of modifications that were considered previously in the literature \cite{hossenfelder2013minimal,kempf1995hilbert} and implies that higher momentum leads to stronger effects of the modification.

Since $f(\p)$ is adimensional, it depends on momentum via the product $\sqrt{\beta} \, \p$ , with $\beta$ a constant with dimension of inverse squared momentum (or inverse squared mass in natural units), which sets the scale where deviations with respect to the standard picture are important\footnote{We use here the same notation of \cite{kempf1995hilbert} to simplify the comparison of our results with those described there.}. In natural units $\beta$ is naturally expected to be of the order of $m_{Pl}^{-2}$, with $m_{Pl}$ the Planck mass, but we consider this scale as a free parameter.

Previous works aimed at explicitly constructing operators $\p$ that satisfy the algebra \eqref{modalg} and lead to a non-trivial minimal length. Before discussing subtleties arising in these approaches, we briefly review these constructions which may be unified as follows:
consider the {\it unmodified} momentum operator $\k$ on $\mathcal{H}$ such that $\x$ and $\k$ satisfy the standard commutation relation
\begin{equation}
	[\x,\k]=i\mathbb{I}\ ,\label{unmod}
\end{equation}
i.e. the momentum operator is given by
\begin{align}
\k=\mathcal{F}^{\dagger} \x \mathcal{F}\ ,
\end{align}
with $\mathcal{F}$ the Fourier transform acting on states $\phi\in\mathcal{H}$ via
\begin{align}\label{fourier}
(\mathcal{F} \phi)(k)=\frac{1}{\sqrt{2\pi}}\int \mathrm{e}^{ikx} \phi(x) dx\ .
\end{align}
We can then deform the spectrum of $\k$ until we find a linear operator $\p=p(\k)$ that satisfies \eqref{modalg} (see Fig. \ref{kspace}).
\begin{figure}[t]
	\centering
	\def\svgwidth{0.38\textwidth} \input{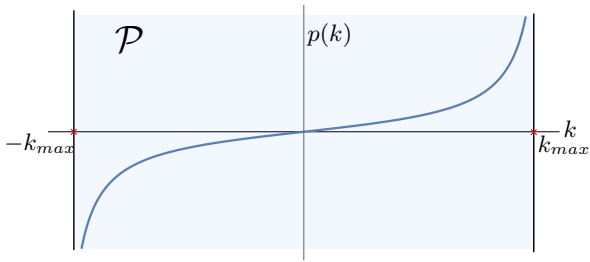}
	\caption{Theorem \ref{thm:mainthm} gives sufficient conditions for representing the operator $\p$ as a function $p(\k)$, where $\k$ denotes the unmodified momentum operator (here shown for the modification $f(p)=1+ \beta p^2$). We show that the support of $\k$ must be restricted to an interval $[-k_{max},k_{max}]$, i.e. modifying the algebra directly leads to UV cut-off and, hence, minimal length.}  \label{kspace}
\end{figure}
More concretely, by functional calculus we can evaluate the commutator
\begin{equation}\label{xfp}
	[\x,\p]=\left[i\frac{d}{dk},p(k)\right]=i\frac{d}{d k} p(k)\ ,
\end{equation}
to find that \eqref{modalg} translates to the differential equation
\begin{align}\label{fkeq}
	\frac{d}{d k} p(k)=f(p(k))\ .
\end{align}
By the implicit function theorem we obtain the solution
\begin{align}
\label{kofp}
k(p)=\int_{p_{0}}^{p} dp' \frac{1}{f(p')} \ ,
\end{align}
where we set $p_0=0$, such that the momentum operators $\p$ and $\k$ yield the same physics in the small-momentum regime. For all cases where this integral is finite in the limit $p \rightarrow \infty$, this implies the existence of a momentum cut-off,
\begin{equation}
	k_{\max}:=\int_{0}^{\infty} dp' \frac{1}{f(p')}\ .\label{eq:kmax}
\end{equation}
This argument, commonly found in related literature \cite{kempf1995hilbert,Brout}, shows that for states with support in the interval $[-k_{\max},k_{\max}]$ there exist operators $\fat x$ and $\fat p$ satisfying the modified commutation relation~\eqref{modalg}.
Moreover, the operator $\fat p$ is just defined on a proper subspace $\mc P=\mc L^2([-k_{\max},k_{\max}])$ of the Hilbert space $\mc H$ (Fig.~\ref{fig:sets}).
In particular, states with vanishing position uncertainty which have, by \eqref{fourier}, a broad momentum distribution are no longer contained in $\mc P$, hence implying the existence of a non-trivial minimal length.
\begin{figure}[h]
	\centering
	\def\svgwidth{0.3\textwidth} \input{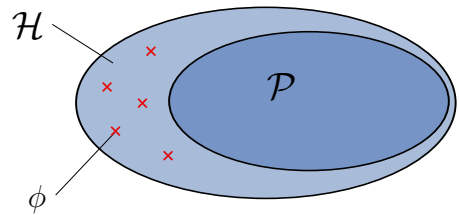}
	\caption{For an unmodified algebra, minimal uncertainty is obtained for states $\phi$ from a state space $\mathcal{H}$. In the case of the usual Heisenberg algebra the corresponding minimal length becomes trivial, that is, all states in $\mathcal{H}$ are physical. If a modification of the algebra directly leads to a restricted state space $\mathcal{P}$ that does not contain such states, one obtains non-trivial minimal length as a direct consequence of modifying the algebra.
	}
	\label{fig:sets}
\end{figure}
In this sense the existence of a momentum cut-off, sometimes also referred to as a UV cut-off, directly implies the existence of a non-trivial minimal length. Conversely, if $\mathcal{P}=\mathcal{H}$, i.e. there is no momentum cut-off, there are states with vanishing position uncertainty as is the case for the unmodified Heisenberg algebra.

However, in order to interpret the momentum cut-off and the corresponding minimal length as a direct consequence of modifying the algebra, it is not sufficient to just show the existence of operators that allow for non-trivial minimal lengths as above. Instead, one needs to show that {\it all} operators $\x$ and $\p$ satisfying \eqref{modalg} lead to this effect. In this section we show that this is indeed the case if, given $\x$ is the standard position operator, we additionally require that the spectral projections of the canonical momentum operator $\k=\mathcal{F}^{\dagger} \x \mathcal{F} $ and the modified momentum operator $\p$ are close to each other in the regime of small momenta. This assumption immediately implies that, in this regime, the probability distributions induced by $\k$ and $\p$ are almost the same, which agrees with the intuition that {\it observable effects of the modified algebra only occur for high momenta}. More precisely, we require that there exists $\epsilon_0> 0$ such that for all $\epsilon\in(0,\epsilon_0)$ there is an $\epsilon'>0$ and $\delta>0$ with $\delta \sim \mc{O}(\epsilon^3)$ such that 
\begin{align}\label{eq:assump1}
\|E_{\fat k}([-\epsilon,\epsilon])-E_{\fat p}([-\epsilon',\epsilon'])\|<\delta \ ,
\end{align}
where $E_{\fat k}$ and $E_{\fat p}$ denote the spectral projections of $\k$ and $\p$, respectively. With this assumption we show the following theorem (see Appendix for the proof):

\begin{theorem}
	\label{thm:mainthm}
Let $\x$ be the position operator and $\k=\mathcal{F}^{\dagger} \x \mathcal{F}$ be the unmodified momentum operator as defined above. Denote by $\p$ a modified momentum operator on a Hilbert space $\mathcal{P}$, i.e. $\x$ and $\p$ satisfy the modified Heisenberg algebra \eqref{modalg} for all states in $\mathcal{P}$. If additionally \eqref{eq:assump1} is satisfied, then 
\begin{itemize}
	\item there exists a \emph{momentum cut-off}, i.e. there is an interval
	\begin{equation}
	I=[-k_{\max},k_{\max}]\subseteq\mathbb{R}\ ,
	\end{equation}
	with $k_{\max}$ as in \eqref{eq:kmax} such that $\mathcal{P}=\mathcal{L}^2(I)$,
	\item there is a function $p:I\rightarrow\mathbb{R}$ such that for all states in $\mathcal{P}$
	\begin{equation}
	\p=p(\k) \ .
	\end{equation}
	Hence, $\p$ is indeed a function of $\k$, which legitimates the standard construction after eq. \eqref{fourier}.
\end{itemize}
\end{theorem}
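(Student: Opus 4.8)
The plan is to trade the deformed pair $(\x,\p)$ for an undeformed one. Put $g(p):=\int_0^p dp'/f(p')$; since $f>0$ (it is even, convex on $\mathbb{R}^+$, $f(0)=1$), $g$ is a strictly increasing homeomorphism of $\mathbb{R}$ onto $(-k_{\max},k_{\max})$ with $k_{\max}$ as in \eqref{eq:kmax}, and $g(p)=p+o(p)$ near $0$ (indeed $g(p)=p+\mathcal{O}(p^3)$ for smooth $f$, matching the $\mathcal{O}(\epsilon^3)$ in \eqref{eq:assump1}, since $g'(0)=1$ and $g''(0)=0$). Define $\tilde\k:=g(\p)$ by functional calculus on $\mathcal{P}$. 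A commutator chain rule — obtained by induction on monomials, using that $f(\p)$ commutes with $\p$, from $[\x,\p^n]=in\,\p^{n-1}f(\p)$, hence $[\x,h(\p)]=ih'(\p)f(\p)$ — turns \eqref{modalg} into $[\x,\tilde\k]=i\mathbb{I}$ on $\mathcal{P}$, while $\sigma(\tilde\k)=\overline{g(\sigma(\p))}=[-k_{\max},k_{\max}]=:I$. Hence it suffices to prove $\mathcal{P}=\mathcal{F}^{\dagger}\mathcal{L}^2(I)$ and $\tilde\k=\k|_{\mathcal{P}}$, since then $\p=g^{-1}(\tilde\k)=p(\k)$ with $p:=g^{-1}$.

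Next I would analyse the canonical pair $(\x|_{\mathcal{P}},\tilde\k)$. The operator $\tilde\k$ has empty point spectrum: if $\tilde\k\psi=\lambda\psi$ then $\langle\psi,[\x,\tilde\k]\psi\rangle=(\lambda-\bar\lambda)\langle\psi,\x\psi\rangle=0$, contradicting $[\x,\tilde\k]=i$. Diagonalising $\tilde\k$ and inserting $[\x,\tilde\k]=i$ into the resulting multiplication representation forces its scalar spectral measure to be purely absolutely continuous of uniform multiplicity, with $\x$ acting as $i\,d/dk$ plus a bounded multiplication; the multiplicity must be one once the representation is compared with the simple operator $\k$. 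Unwinding this, one arrives — in the representation where $\x=i\,d/dk$ and $\k$ is multiplication by $k$ — at a measurable set $S\subseteq I$ meeting every open subinterval of $I$ in positive measure, and a real function $A$ on $\mathbb{R}$, with
\begin{equation}
\mathcal{P}=e^{iA(\x)}\,\mathcal{L}^2(S),\qquad \tilde\k=e^{iA(\x)}\big(\k|_{\mathcal{L}^2(S)}\big)e^{-iA(\x)} .
\end{equation}
Here $e^{iA(\x)}$ is the only unitary freedom that changes $\mathcal{P}$ while preserving the action of the genuine position operator $\x$, and $S$ is the residual ``support''; the theorem is precisely that $A$ is affine (hence trivial modulo the rescaling already absorbed in $\beta$) and $S=I$ up to a null set.

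The hypothesis \eqref{eq:assump1} enters through the small-momentum regime. Since $g(p)=p+o(p)$, the projections $E_{\p}([-\epsilon',\epsilon'])$ coincide with $E_{\tilde\k}(J_{\epsilon'})$ for $J_{\epsilon'}=[g(-\epsilon'),g(\epsilon')]=[-\epsilon',\epsilon']+o(\epsilon')$, so \eqref{eq:assump1} reads
\begin{equation}
\big\|\mathbf{1}_{[-\epsilon,\epsilon]}-e^{iA(\x)}\mathbf{1}_{J_{\epsilon'}\cap S}\,e^{-iA(\x)}\big\|<\delta\sim\mathcal{O}(\epsilon^3)
\end{equation}
in that representation. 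Two orthogonal projections at distance below $1$ have unitarily equivalent ranges, and multiplication projections are either equal or at distance exactly one; feeding in the quantitative decay and letting $\epsilon\to0$ then forces $e^{iA(\x)}$ to become, to the relevant order, a multiplication (i.e.\ $A$ affine near $0$), thereupon $J_{\epsilon'}\cap S=[-\epsilon,\epsilon]$ up to null sets with $\epsilon'=\epsilon+o(\epsilon)$, so $S$ contains a neighbourhood of $0$.

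What remains — and what I expect to be the real obstacle — is the globalisation: promoting this local data at $k=0$ to $S=I$ (a.e.) and $A$ affine on all of $\mathbb{R}$. Every constraint used so far is local in momentum and is met by genuine pathologies: e.g.\ $S$ a positive-measure ``fat Cantor'' subset of $I$, or a non-affine $A$, both still yielding a self-adjoint $\p$ with $\sigma(\p)=\mathbb{R}$ obeying \eqref{modalg}. The transport must be supplied by the commutation relation together with the rigidity it imposes on vectors of $\mathcal{P}$ (in the momentum representation $\x=i\,d/dk$ preserves a dense subspace of $\mathcal{P}$, so these are $H^1$-functions whose support is constrained): one wants to bootstrap the small-$\epsilon$ matching outward, upgrading $E_{\k}([-R,R])=E_{\tilde\k}([-R,R])$ from a neighbourhood of $R=0$ to every $R<k_{\max}$, and to exclude non-affine $A$ by a connectedness or unique-continuation argument. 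Once $E_{\k}([-R,R])=E_{\tilde\k}([-R,R])$ holds for all $R<k_{\max}$, taking the union over $R$ gives $\mathcal{P}=\mathcal{F}^{\dagger}\mathcal{L}^2(I)$ and $\tilde\k=\k|_{\mathcal{P}}$ — a momentum cut-off at $\pm k_{\max}$ and $\p=p(\k)$ — which is the claim.
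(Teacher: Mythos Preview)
Your reduction to an undeformed pair $(\x,\tilde\k)$ via $\tilde\k=g(\p)$ is sound and elegant, but the proposal then stalls at precisely the point you identify as ``the real obstacle'': the globalisation from a neighbourhood of $k=0$ to all of $I$ is not carried out. The sketch you give---bootstrapping $E_{\k}([-R,R])=E_{\tilde\k}([-R,R])$ outward by ``connectedness or unique continuation''---does not name a concrete mechanism, and the pathologies you yourself list (fat-Cantor $S$, non-affine $A$) are not excluded by anything you have written. The representation-theoretic classification step is also asserted rather than proved: that every pair $(\x|_{\mathcal P},\tilde\k)$ with $[\x,\tilde\k]=i$ and $\sigma(\tilde\k)=I$ arises as $e^{iA(\x)}(\k|_{\mathcal L^2(S)})e^{-iA(\x)}$ for some $S\subseteq I$ and real $A$ is a Stone--von Neumann-type statement for a bounded generator, and needs its own argument. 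Finally, your local step (``multiplication projections are either equal or at distance one'') only bites once both projections are multiplications in the \emph{same} representation, which presupposes that $e^{iA(\x)}$ is already diagonal in $k$---the very thing to be shown.

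The paper's route avoids all of this by proving directly that $\k$ and $\p$ \emph{strongly commute}, i.e.\ $[E_{\k}(\Omega),E_{\p}(\Omega')]=0$ for all Borel $\Omega,\Omega'$. The mechanism is a covering estimate: partition $\Omega,\Omega'$ into $\sim|\Omega|/\epsilon$ and $\sim|\Omega'|/\epsilon$ intervals of width $2\epsilon$, translate each $E_{\k}$-cell to the origin with $U_\alpha=e^{i\alpha\x}$, replace it there by an $E_{\p}$-cell at cost $\delta$ via \eqref{eq:assump1}, and observe that the conjugated $E_{\p}$-cell commutes with every $E_{\p}(\Omega^\beta_{\epsilon'})$. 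The total error is $\mathcal O(\epsilon^{-2})\cdot\delta$, which vanishes because $\delta\sim\epsilon^3$. (The paper invokes an additional structural hypothesis---that $U_tE_{\p}(\Omega)U_{-t}$ is again an $E_{\p}$-projection modulo a piece orthogonal to $\mathcal P$---to make the commutator in the last step actually zero; this is implicit in the ``admissible'' framework of the appendix.) Once strong commutativity holds, simplicity of $\sigma(\p)$ forces $P\k P=g(\p)$ for some measurable $g$, and plugging into the deformed algebra gives $g'=1/f$, hence $g(p)=\int_0^p ds/f(s)$ and $\mathcal P=\mathcal L^2(I)$. This is effectively the converse of your first step, reached without any representation-theoretic classification.

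What your approach would buy, if completed, is a more conceptual picture (the freedom is exactly a phase $e^{iA(\x)}$ and a support $S$); what the paper's approach buys is a finished proof that bypasses the classification entirely. The missing ingredient in your argument is precisely a device that propagates the local matching at $k=0$ globally; the paper supplies it through the translation covariance of $E_{\k}$ under $U_t$ combined with the $\epsilon^3$ decay, and you would need something of comparable strength.
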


In other words, the scope of Theorem \ref{thm:mainthm} can be summarised as follows: assume an experimenter who, on a length scale that is above Planck length, can agree on a clear notion of what the position $\x$ and the unmodified momentum $\k$, i.e. a particular representation of the Heisenberg algebra, should be. If he extrapolates his notion of $\x$ down to lower scales, and assumes that a modified algebra has a  consistent limit to what he observed on higher scales, he obtains by Theorem~\ref{thm:mainthm} a {\it unique} notion of what $\p$ is in this situation.
Theorem \ref{thm:mainthm} can therefore be seen as the reason why the aforementioned construction is indeed meaningful: the construction describes {\it all} possible modified momentum operators $\p$. Importantly, it proves the existence of a UV cut-off and a corresponding non-trivial minimal length as a direct consequence of modifying the underlying algebra.

Note that Theorem \ref{thm:mainthm} builds on the natural assumption that measurement probabilities should be similar when measuring the modified momentum operator $\p$ or the unmodified momentum operator $\k$ in the regime of small momentum. This assumption is essential since it provides a means to characterise the action of $\x$ on states in $\mathcal{P}$: as the position operator $\x$ induces shifts in $\k$-momentum space, knowing that $\p$ and $\k$ are not too different in the small-momentum regime implies that $\x$ also induces shifts in $\p$-momentum space (up to some arbitrarily small error). By how much $\x$ is shifting a state in $\k$- or $\p$-space is governed by the respective commutation relation. Hence, while $\x$ induces constant shifts in $\k$-space, the strength of shifting in $\p$-space is monotonically increasing with higher momentum as is the modification $f$ (Assumption (iii)). This leads to normalizability constraints: for high enough momentum the shift becomes too large for the corresponding states to be normalizable. The cut-off parameter $k_{\max}$ is exactly the momentum value for which the states cannot be normalised anymore.

The self-adjoint position operator $\x$ has a domain dense in $\mathcal{H}$. Theorem \ref{thm:mainthm} shows however that, if the modification $f(\p)$ is such that $k_{\max}$ is finite, the relevant Hilbert space $\mathcal{P}$ is strictly smaller than $\mathcal{H}$. So, how can the measurements of position be described if acted on states in $\mathcal{P}$? This is in particular interesting since the position operator $\x$ restricted to $\mathcal{P}$ is {\it not} self-adjoint anymore. Nevertheless, a position measurement still has a well-understood description, known as a POVM (positive operator valued measure). POVMs describe the most general form of a quantum measurement (see e.g. \cite{Wilde}). An explicit construction of the position operator as a POVM on a restricted state space can be found in \cite{Werner1990}.

%%%%%%%%%%%%%%%%%%%%%%%%%%%%%%%%%%%%%%%%%%%%
\section{Optimal uncertainty relations in terms of variances}
% !TEX root = Letter-GUPWPDR1-GP.tex
The concept of minimal length expresses the fact that for all states position measurements will in general not produce arbitrarily sharp outcome distributions. This is typically quantified by computing the minimal variance of this distribution
\begin{equation}
l^2_{\min}= \min_{\psi\in\mathcal{P}} \Delta\x \ ,
\label{minlengthdef}
\end{equation}
where
\begin{align}
\Delta\x  = \langle \psi | \x^2 |\psi\rangle - \langle \psi | \x |\psi\rangle^2.\end{align}
As such, minimal length is intimately related to the concept of uncertainty relations, which place constraints on how sharp the distribution for some observable $A$ can be, given the sharpness of the distribution of another, say $B$. The standard example of such an uncertainty relation is the one due to Robertson and Kennard \cite{Robertson,Kennard}, i.e.
\begin{align}
\label{rob}
\Delta A \Delta B \geq \frac{1}{4}|\bra{\psi} [A,B] \ket{\psi} |^2 \ .
\end{align}
A naive approach to compute the minimal length is to impose equality in \eqref{rob} and then search for minimising states within the corresponding subset of states. In this section we will remark that this approach will fail in most cases due to the state dependence of the lower bound in \eqref{rob}. Instead, we provide a general framework to obtain optimal and state-independent uncertainty relations for a modified Heisenberg algebra in the first part of this section. As a side product this  allows to directly compute the corresponding minimal length. Then, in the second part of the section, we exemplarily apply this framework to the modification $f(p)=1+\beta p^2$, since here all differential equations can be solved analytically. We obtain the same uncertainty relation as in \cite{kempf1995hilbert}, but now with a proof of its optimality. We also apply our framework to other modifications, i.e $f(p)=\cosh(\sqrt{\beta}p)$ and $f(p)=1+ \beta p^2+\beta^2 p^4/4$, employing numerical tools (see also \cite{Bojowald:2011jd} for a treatment of higher order modifications $f(\p)$).

\subsection{A general method for finding uncertainty relations and minimal length}
Uncertainty relations allow to lower bound the uncertainty of one measurement, given the uncertainty of another measurement. Having this in mind, a good way to generally think about uncertainty is in terms of diagrams, as shown in Fig. \ref{dxdp} \citep{dammeier2015uncertainty,kempf1995hilbert, abdelkhalek2015optimality}. Here, the blue shaded region indicates the set $\mathcal{U}$ of all tuples $(\Delta \p,\Delta \x)$ that can be obtained by measuring both $\p$ and $\x$ on the same state $\psi$, where $\psi$ is taken from $\mathcal{P}$. In the following we will refer to $\mathcal{U}$ as the \textit{uncertainty region}.

\begin{figure}[htb]
\centering \def\svgwidth{0.45\textwidth}
\input{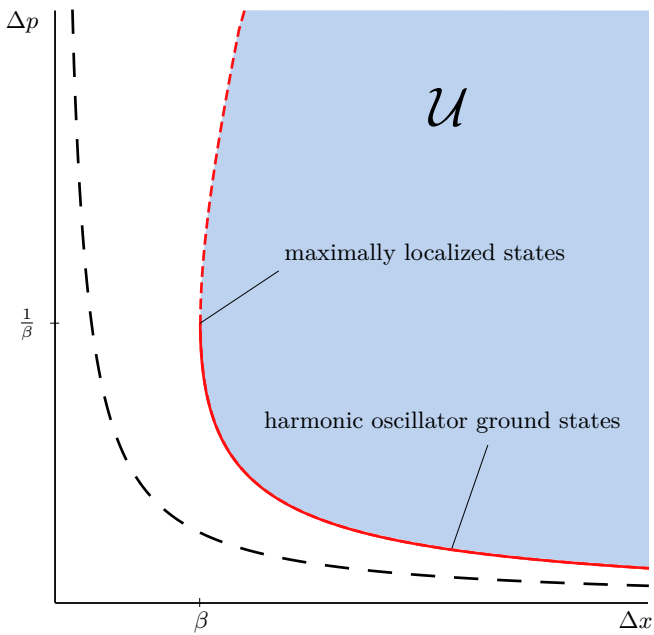}
\caption{The boundary of $\mathcal{U}$ for the modified (with $f(p)=1+\beta p^2$) and standard (long dashed black curve) Heisenberg algebra. We show the point corresponding to the maximally localized state \eqref{mls}, for which $\Delta \fat x=1/\Delta \fat p =\beta $. The tradeoff curve (solid red line) $\gamma_{\mc U}(\lambda)$ branch below this point corresponds to ground states of (deformed) harmonic oscillators \eqref{optstates} with, from right to left, increasing frequency $\omega \in \,]0,\infty[$. The upper part of the curve (short dashed red line) is obtained by considering states $\psi(k) \propto \cos(\sqrt{\beta} k)^{\gamma_\lambda}$ with $\gamma_\lambda <1$, which are not the ground state of a harmonic oscillator.}  \label{dxdp}
\end{figure}
Uncertainty relations express the fact that there is no state such that both variances are getting simultaneously arbitrary small.
If this is the case, the point $(0,0)$ is not contained in $\mathcal{U}$ and the uncertainty diagram has some empty space around the origin.

However, we can still ask for the ``smallest'' points in $\mathcal{U}$, i. e. the points on the ``lower left'' boundary of $\mathcal{U}$, which are obtained by minimising one variance under the constraint that the other stays below some fixed threshold, and vice versa \citep{dammeier2015uncertainty, abdelkhalek2015optimality}. In Fig.\ref{dxdp} this trade-off curve is indicated by a solid red line and henceforth referred to as an {\it optimal} and {\it state-independent} uncertainty relation.
Here, the term ``optimal'' means that for any attainable value for $\Delta\fat p$, or equivalently for $\Delta\fat x$, we can find a state in $\mc P$ such that the uncertainty relation is tight, i.e. that equality is attained.
``State-independent'' means that the uncertainty relation only depends on functions of the variances $\Delta\fat p$, $\Delta\fat x$ and constants, but not on any other quantities that depend on the state.
Hence an optimal and state-independent uncertainty relation defines a trade-off curve such that for any attainable value of $\Delta \p$, we can directly conclude the {\it best} lower bound on $\Delta \x$ that will hold for {\it all} states in $\mathcal{P}$, and vice versa.

Importantly, minimal length as defined in \eqref{minlengthdef} can be directly computed if such an uncertainty relation is known by simply optimising over all possible values of $\Delta \p$.
For this and other operationally motivated reasons pointed out by David Deutsch in the 80's \cite{deutsch1983uncertainty}, optimal and state-independent uncertainty relations are the ones to look for. However, such uncertainty relations are usually also the hardest ones to obtain because they always involve a constrained optimization problem over the whole state space.

At this point it might be important to recall the often ignored fact that, in general, an optimal and state-independent uncertainty relation cannot be inferred from the relation \eqref{rob}, which in our case takes the form
\begin{align}
\label{robxp}
\Delta \x \Delta \p \geq \frac{1}{4}|\bra{\psi} f(\p) \ket{\psi} |^2 \ .
\end{align}
Here, the expectation value on the right hand side of \eqref{robxp} (or \eqref{rob}) is generally {\it state-dependent}, such that evaluating the uncertainty relation for a particular state does not allow to directly conclude anything about the uncertainty of any other state. In particular, it is generally not true, in neither direction, that states, which are giving equality in \eqref{rob}, correspond to points on the boundary of an uncertainty region. This circumstance can be exemplary checked by considering any non-commuting pair of measurements, e.g. two angular momentum components \cite{dammeier2015uncertainty}.  This has also been pointed out by \cite{spindel, dorsch} for the case of the smallest value of $\Delta \x$. In general, this makes the method to investigate equality in \eqref{rob} and to infer minimal length, quite problematic.

However, there are at least two exceptions to the above criticism:
 the one is the usual Heisenberg algebra, where the right hand side of \eqref{rob} is the same for every normalized state and thus state-independent. Optimality is granted by Gaussian states, for which it is well-known that they achieve equality in \eqref{rob} in the whole parameter range of $\Delta \p$ and $\Delta \x$, respectively.

The other exception has been exploited in \citep{kempf1995hilbert} for the case of a modified Heisenberg algebra with $f(p)=1+\beta p^2$. If we take the square root on both sides of \eqref{robxp}, the right hand side only contains a constant, some factors and the second moment of $\p$ and thus one obtains a state-independent uncertainty relation
\begin{align}
\label{ucr95}
\sqrt{\Delta \x \Delta \p}  &\geq \frac{1}{2}\left( 1+\beta \Delta \p+\langle\fat p\rangle^2\right)\geq\frac{1}{2}\left( 1+\beta \Delta \p\right).
\end{align}
Whilst this was not spelled out in \citep{kempf1995hilbert} directly, this bound is in fact optimal as we will show at the end of this section by a straightforward application of Theorem~\ref{thm:convex}.

For a large class of modifications (see Subsection~\ref{nonoptbound} in the appendix) we can set
\begin{align}
g(p):=f(-\sqrt{|p|})\ ,
\end{align}
and substitute this into \eqref{robxp}. Then, in a similar spirit as above, we obtain
\begin{align}
\label{ucrbla}
  \Delta \x &\geq \frac{g(\Delta \p)^2}{4 \Delta \p} \ ,
\end{align}
which is state-independent but in general not optimal (see the blue line in Fig.~\ref{ucrbounds}).

\begin{figure*}
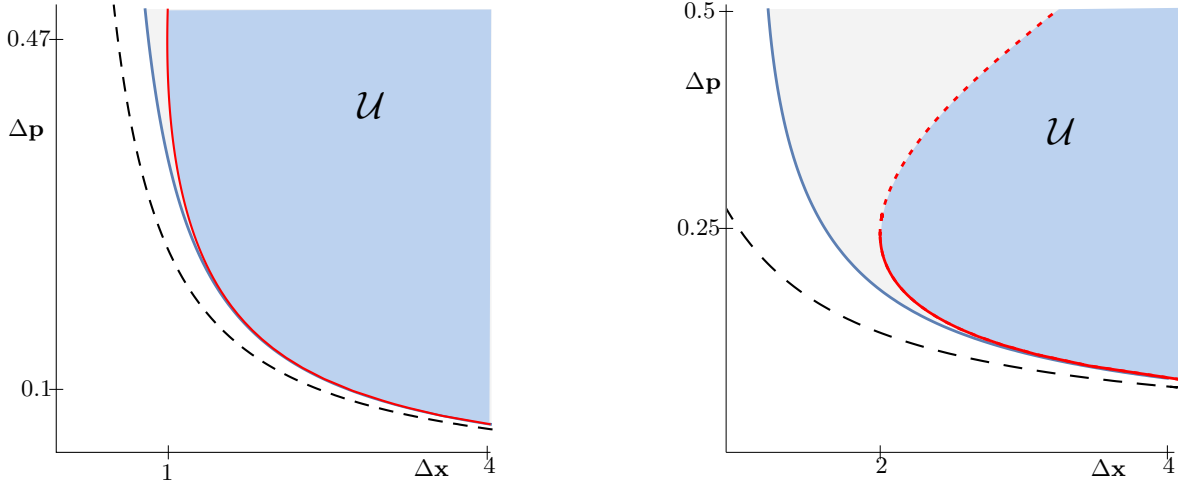

  		\hspace{0.4cm}
  		\centering \def\svgwidth{0.35\textwidth}
  		\input{pix/cosh}
  		\hspace{2.4cm}
  		\def\svgwidth{0.35\textwidth}
  		\input{pix/quartic}
  		\caption{
  			Uncertainty regions $\mc U$ for the modifications $f_1=\cosh(\sqrt{\beta} p)$ (left figure) and $f_2=1+\beta p^2+ \beta^2 p^4/4$ (right figure) bounded by the red lines. The black dashed lines show the standard Heisenberg bound for the position and unmodified momentum operator. The blue lines show the not optimal but state-independent bounds from \eqref{ucrbla}.}

  		\label{ucrbounds}
\end{figure*}

To the best of our knowledge no universal method for obtaining an optimal, state-independent uncertainty relation for an arbitrary pair of observables $A$ and $B$ is known. However, it is possible to obtain lower bounds (and by this a state-independent uncertainty relation) on every uncertainty region by computing its convex hull (see \citep{dammeier2015uncertainty,convexopti,srw} and Subsection~\ref{robbound} in the appendix). Such a bound will become optimal whenever $\mathcal{U}$ itself is convex. In this case an uncertainty relation can always be characterised by a function $u(\lambda)$ with $\lambda\in[0,1]$ and a set of linear inequalities
\begin{align}
\label{linbound}
\lambda \Delta \x + (1-\lambda)\Delta \p \geq u(\lambda)\ .
\end{align}
The function $u(\lambda)$ will give us a full description of the boundary of the convex hull of $\mc U$ (see Subsection~\ref{robbound} in the appendix and Fig.\ref{fig:convex}). If needed, one can recover the trade-off curve, denoted by $\xi_{\mc U}(\lambda)$ in the following, by the formula
\begin{align}
\label{gammapoints}
\xi_{\mc U}(\lambda)=(u(\lambda)+(1-\lambda)u'(\lambda),u(\lambda)-\lambda u'(\lambda))\ .
\end{align}
Note that, given a particular form of $u(\lambda)$, one can always find a substitution for $\lambda$ in \eqref{gammapoints}, such that \eqref{gammapoints} has a form that only depends on $\Delta \x$ and $\Delta \p$.

The following theorem states that the ansatz above is already sufficient for providing an {\it optimal} uncertainty relation.

\begin{theorem}
\label{thm:convex}
Let $\mathcal{U}$ be the uncertainty region of  $\x$ and $\p$ satisfying a modified algebra
with a modification $f(p)$ that obeys the assumptions described in section~\ref{heisenalg}. Then
\begin{itemize}
\item[1.] the lower boundary, i.e. the trade-off curve, of $\mathcal{U}$ lies completely on the boundary of a convex set,
\item[2.] states corresponding to this trade-off curve always have expectation $\langle \p\rangle=0$ and can be chosen to have $\langle \x\rangle=0$,
\item[3.] these states are ground states of the modified harmonic oscillator $$H_{\lambda}=\lambda\x^2+(1-\lambda)\p^2.$$
\end{itemize}
\end{theorem}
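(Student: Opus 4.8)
The plan is to identify the value $u(\lambda):=\min_{\psi\in\mathcal P}[\lambda\Delta\x+(1-\lambda)\Delta\p]$ of the scalarised problem in \eqref{linbound} with a ground-state energy, and then to read off all three items. Writing each variance as $\Delta_\psi A=\min_{a}\langle\psi|(A-a)^2|\psi\rangle$ and exchanging the two minimisations gives $u(\lambda)=\min_{a,b}E_0(\lambda(\x-a)^2+(1-\lambda)(\p-b)^2)$, where $E_0$ denotes the bottom of the spectrum. Using the representation $\mathcal P=\mathcal L^2([-k_{\max},k_{\max}])$, $\p=p(\k)$ (the construction following \eqref{fourier}, which Theorem~\ref{thm:mainthm} shows to be the only one), conjugation by the unitary $e^{ia\k}$ sends $\x\mapsto\x+a$, fixes $\k$ hence $\p=p(\k)$, and preserves $\mathcal P$, so the $a$-shift is harmless: $u(\lambda)=\min_b E_0(K^\lambda_b)$ with $K^\lambda_b:=\lambda\x^2+(1-\lambda)(\p-b)^2=-\lambda\,\partial_k^2+(1-\lambda)(p(k)-b)^2$. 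For $\lambda\in(0,1)$ and $k_{\max}<\infty$ the function $p(k)$ diverges at $\pm k_{\max}$ (since $1/f$ is integrable), so $K^\lambda_b$ is a confining Schrödinger operator on a bounded interval, with compact resolvent, discrete spectrum, and a unique strictly positive ground state $\psi^\lambda_b$ by Perron--Frobenius. (If $k_{\max}=\infty$ the algebra is the unmodified one and minimal length is trivial.)

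The key step, which I expect to be the main obstacle, is to show that $b\mapsto E_0(K^\lambda_b)$ is minimised at $b=0$, and strictly so when $k_{\max}<\infty$. The reflection $k\mapsto-k$ together with the oddness of $p$ (a consequence of $f$ being even) gives $E_0(K^\lambda_b)=E_0(K^\lambda_{-b})$, so this function of $b$ is even; the content is that it does not dip below its value at $0$. I would prove this by a symmetrisation/comparison argument: feed the ground state $\psi^\lambda_b$, folded into an even trial function, into the variational principle for $K^\lambda_0$, and exploit that $f\ge f(0)=1$ with $p$ convex on $\mathbb{R}^+$, so that shifting the minimum of the well $(p(k)-b)^2$ away from the origin moves it into a region where the slope $p'=f(p)$ is larger -- the effective confinement is stiffer there, which can only raise the ground energy.

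Granting the lemma, $u(\lambda)=E_0(H_\lambda)$ with $H_\lambda=\lambda\x^2+(1-\lambda)\p^2$, and every state on the trade-off curve is -- after the harmless $e^{ia\k}$ shift -- the ground state $\psi_\lambda$ of $H_\lambda$; this is item~3. Since the potential $(1-\lambda)p(k)^2$ of $H_\lambda=-\lambda\partial_k^2+(1-\lambda)p(k)^2$ is even, $\psi_\lambda$ is real and even, whence $\langle\p\rangle_{\psi_\lambda}=\int p(k)|\psi_\lambda(k)|^2dk=0$, and $\langle\x\rangle_{\psi_\lambda}=0$ because $\psi_\lambda$ is real and vanishes at $\pm k_{\max}$. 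For an arbitrary trade-off state $\psi$, choosing $a=\langle\x\rangle_\psi$ and $b=\langle\p\rangle_\psi$ in the reduction shows the shifted state is the ground state of $K^\lambda_b$ with $E_0(K^\lambda_b)=u(\lambda)=\min_{b'}E_0(K^\lambda_{b'})$, so $b$ minimises $b'\mapsto E_0(K^\lambda_{b'})$ and thus $b=0$ by the strict lemma; the shift $e^{ia\k}$ affects neither variance, which is why $\langle\x\rangle$ may always be taken to vanish. This gives item~2.

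For item~1, note that $u(\lambda)$ is an infimum of functions affine in $\lambda$, hence concave, and \eqref{linbound} holds for every state by the very definition of $u$, so $\mathcal U$ is contained in the convex set $C:=\bigcap_{\lambda\in[0,1]}\{\lambda\Delta\x+(1-\lambda)\Delta\p\ge u(\lambda)\}$. For each $\lambda\in(0,1)$ the point $\xi_{\mc U}(\lambda)=(\Delta_{\psi_\lambda}\x,\Delta_{\psi_\lambda}\p)$ lies in $\mathcal U$ and on the line $\lambda\Delta\x+(1-\lambda)\Delta\p=u(\lambda)$, hence on $\partial C$; since $\lambda\mapsto\psi_\lambda$ is continuous (the ground state of $H_\lambda$ is simple and isolated, so analytic perturbation theory applies) and the supporting-line slopes $-(1-\lambda)/\lambda$ exhaust $(-\infty,0)$, the arc $\{\xi_{\mc U}(\lambda):\lambda\in(0,1)\}$ is the entire lower-left boundary of $C$; being contained in $\mathcal U\subseteq C$, it is also the trade-off curve of $\mathcal U$, which therefore lies on the boundary of the convex set $C$. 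The explicit parametrisation \eqref{gammapoints} then follows from the envelope condition $\partial_\lambda[\lambda\Delta\x+(1-\lambda)\Delta\p-u(\lambda)]=0$ for these supporting lines.
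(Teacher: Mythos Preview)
Your overall architecture matches the paper's: rewrite $u(\lambda)$ via $\Delta A=\min_a\langle(A-a)^2\rangle$, kill the $\x$-shift by conjugating with $e^{ia\k}$ (which fixes $\p=p(\k)$), and reduce to the ground-state problem for $K^\lambda_b=-\lambda\partial_k^2+(1-\lambda)(p(k)-b)^2$. Granting the key lemma that $b\mapsto E_0(K^\lambda_b)$ is minimised at $b=0$, your deductions of items 1--3 are essentially those of the paper (the paper phrases item~1 as $\Gamma(\mathcal U_{lc})=\Gamma(\mathcal U)=\Gamma(\mathcal U_{00})$, where $\mathcal U_{00}$ is the region of second moments, but this is your convex set $C$ in different clothing).

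The gap is the lemma itself, which you correctly flag as the obstacle but do not prove. Your proposed symmetrisation---fold $\psi^\lambda_b$ into an even trial state and feed it to $K^\lambda_0$---does not close as stated: the variational principle bounds $E_0(K^\lambda_0)$ from above by the energy of the folded state in the \emph{centred} potential $p(k)^2$, but you still have to compare that number to $E_0(K^\lambda_b)$, and the potentials $p(k)^2$ and $(p(k)-b)^2$ are not pointwise ordered. The ``stiffer well'' intuition from $p'=f(p)\ge1$ is only a harmonic approximation near the minimum and does not by itself control the full ground-state energy.

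The paper's device is different and worth noting. It lower-bounds $V_b(k)=(1-\lambda)(p(k)-b)^2$ by its convex envelope $\check V_b$ in $k$, so $E_0(K^\lambda_b)\ge\check E_b$. Since $V_b(k)$ is convex in $b$, so is $\check V_b(k)$, and then a result of Simon (Corollary~13.6 in \emph{Functional Integration and Quantum Physics}) yields that $b\mapsto\check E_b$ is convex; together with the evenness $\check E_b=\check E_{-b}$ this forces the minimum at $b=0$. Finally, at $b=0$ the potential $V_0(k)=(1-\lambda)p(k)^2$ is already convex in $k$ (here assumption~(iii) on $f$ is used), so $\check V_0=V_0$ and the bound is tight: $\check E_0=E_0(K^\lambda_0)=u(\lambda)$. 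This envelope-plus-convexity step is the ingredient your sketch is missing.
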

A proof and a mathematically more dedicated formulation of the statements 1 and 2 from Theorem \ref{thm:convex}
can be found in Subsection~\ref{robbound} in the appendix. However, statement 3 can be concluded directly from 1 and 2 using \eqref{linbound}: from 1 we know that we can obtain the optimal bound $u(\lambda)$ by minimising the expression $\lambda \Delta \x + (1-\lambda)\Delta \p$ for fixed $\lambda$ over all states in $\mathcal{P}$.
Using 2 we arrive at
\begin{align}
\label{h01}
u(\lambda)=\min_{\psi\in\mathcal{P}} \bra{\psi}(\lambda \x^2 + (1-\lambda) \p^2)\ket{\psi},
\end{align}
which is exactly the ground state energy of a harmonic oscillator in the modified algebra.
Moreover, when we represent $\p$ and $\x$ in the domain of $\k$, we can state the following:

\begin{cor}
\label{cor:method}
An optimal and state-independent uncertainty relation can be directly obtained by solving the ground state problem of the Schrödinger operator
\begin{align}
\label{schrödinger}
H_\lambda=-\lambda \partial_k^2 + (1-\lambda) p(k)^2\ ,
\end{align}
with Dirichlet boundary conditions at $\pm k_{max}$ and a symmetric, convex and positive potential $p(k)^2$. This uncertainty relation saturates \eqref{linbound} and can be found by
solving
\begin{align}
H_\lambda \psi_\lambda(k)=u(\lambda) \psi_\lambda(k)\ ,
\end{align}
where $u(\lambda)$
is given by the ground state energy of \eqref{schrödinger}.
\end{cor}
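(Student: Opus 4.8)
The plan is to transport the whole problem into the $\k$-representation supplied by Theorem~\ref{thm:mainthm} and then to read the Schrödinger operator \eqref{schrödinger} directly off the variational formula \eqref{h01} that Theorem~\ref{thm:convex} provides for the optimal bound $u(\lambda)$. By Theorem~\ref{thm:mainthm}, $\mathcal{P}=\mathcal{L}^2(I)$ with $I=[-k_{\max},k_{\max}]$ and $\p=p(\k)$, so on $\mathcal{P}$ the operator $\k$ is multiplication by $k$ and, by \eqref{xfp}, $\x$ acts as $i\partial_k$. Hence $\x^2=-\partial_k^2$, $\p^2$ is multiplication by $p(k)^2$, and the modified harmonic oscillator $H_\lambda=\lambda\x^2+(1-\lambda)\p^2$ of Theorem~\ref{thm:convex} is exactly the operator in \eqref{schrödinger}. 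That $p(k)^2$ is even, positive and convex then follows from a short computation: $p(0)=0$ and $p'(k)=f(p(k))\geq 1>0$ by \eqref{fkeq} and assumption (i), so $p$ is odd, strictly increasing, and $p(k)^2$ is even and positive for $k\neq 0$; moreover $f$ even and differentiable at the origin forces $f'(0)=0$, whence convexity of $f$ on $\mathbb{R}^+$ makes $f$ nondecreasing and $\geq 1$ there, and differentiating twice gives $(p(k)^2)''=2f(p(k))(f(p(k))+p(k)f'(p(k)))\geq 0$. If in addition the integral \eqref{eq:kmax} converges, then $p(k)^2\to\infty$ as $k\to\pm k_{\max}$.

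The only genuine subtlety is the domain over which the minimisation is performed, and it is precisely this that produces the Dirichlet boundary conditions. By statements~1 and~2 of Theorem~\ref{thm:convex} the optimal bound is $u(\lambda)=\min_{\psi\in\mathcal{P}}\bra{\psi}(\lambda\x^2+(1-\lambda)\p^2)\ket{\psi}$, and for $\lambda\in(0,1)$ only states of finite position variance (and finite $\langle\p^2\rangle$) can attain this minimum, so we may restrict to them. Here $\x$ is still the \emph{full} self-adjoint position operator on $\mathcal{H}$, so for $\psi\in\mathcal{P}$ we have $\bra{\psi}\x^2\ket{\psi}=\|\x\psi\|_{\mathcal{H}}^2$, and in the $\k$-representation $\x\psi=i\partial_k\psi$ must lie in $\mathcal{L}^2(\mathbb{R})$. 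Because $\psi$ is supported in $I$, its distributional derivative on $\mathbb{R}$ acquires boundary terms proportional to $\delta_{\pm k_{\max}}$ unless $\psi$ is absolutely continuous with square-integrable derivative and $\psi(\pm k_{\max})=0$; since the Fourier transform of a delta is not square-integrable, $\bra{\psi}\x^2\ket{\psi}=+\infty$ whenever the Dirichlet condition fails, and otherwise $\bra{\psi}\x^2\ket{\psi}=\int_I|\partial_k\psi|^2\,dk$. Thus the admissible states form exactly the form domain of $-\partial_k^2$ with Dirichlet boundary conditions at $\pm k_{\max}$, which is why these conditions appear even though $\x$ restricted to $\mathcal{P}$ is no longer self-adjoint. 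One should also note that restricting to this smaller domain does not change the infimum in \eqref{h01}, which is automatic for $\lambda>0$ because the discarded states contribute the value $+\infty$.

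Once the domain is fixed the remaining argument is classical spectral theory. The operator $H_\lambda$ of \eqref{schrödinger}, with Dirichlet boundary conditions and the nonnegative potential $(1-\lambda)p(k)^2$ on the bounded interval $I$ (which moreover confines for $\lambda<1$), has compact resolvent, hence discrete spectrum and a simple ground state $\psi_\lambda$; by the Rayleigh--Ritz principle the minimum in \eqref{h01} equals its ground-state energy $u(\lambda)$ and is attained at $\psi_\lambda$, so $H_\lambda\psi_\lambda=u(\lambda)\psi_\lambda$. Inserting $u(\lambda)$ into \eqref{linbound} then yields a family of linear inequalities that, by statement~1 of Theorem~\ref{thm:convex}, traces out the boundary of the convex hull of $\mathcal{U}$, hence an optimal and state-independent uncertainty relation; the trade-off curve is recovered from \eqref{gammapoints}. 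The endpoints $\lambda=0$ (where $u(0)=\inf_k p(k)^2=0$ is not attained, so $\Delta\p$ is unbounded below) and $\lambda=1$ (where $H_1=-\partial_k^2$ has Dirichlet ground-state energy $u(1)=\pi^2/(4k_{\max}^2)=l^2_{\min}$) should be recorded separately as degenerate cases. I expect the main obstacle to be exactly the rigorous handling of $\bra{\psi}\x^2\ket{\psi}$ on the restricted space $\mathcal{P}$ --- showing that finiteness of the position variance is equivalent to the Dirichlet condition, and that imposing it does not alter the infimum --- after which everything else follows from standard properties of one-dimensional Schrödinger operators on a finite interval.
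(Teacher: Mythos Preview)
Your proposal is correct and follows essentially the same route as the paper: use Theorem~\ref{thm:convex} to reduce to the variational formula \eqref{h01}, then pass to the $\k$-representation supplied by Theorem~\ref{thm:mainthm} so that $H_\lambda$ becomes the Schr\"odinger operator \eqref{schrödinger}. You supply considerably more detail than the paper does---in particular your justification of the Dirichlet boundary conditions via the distributional derivative of $\psi$ on $\mathbb{R}$, and your verification that $p(k)^2$ is symmetric, positive and convex---whereas the paper simply states the corollary as an immediate consequence of representing $\x$ and $\p$ in the domain of $\k$ and leaves these points implicit.
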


Fortunately, these kind of problems have been the content of many extensive studies (see for example \cite{ReSi:1970, teller}) since the early days of quantum mechanics. Indeed, well-established numerical methods and analytical solutions for several particular instances of $p(k)$ are available.

Asking for an optimal bound in expression \eqref{linbound} for the special case $\lambda=1$ directly translates into characterising the minimal length $l^2_{\min}$ in terms of variances. By Corollary \ref{cor:method} this turns into the task of finding the state $\psi(k)$ and the minimal value $u(1)=l_{min}^2$ such that the differential equation
\begin{align}
-\partial_k^2 \psi(k)=l_{\min}^2\psi(k)\ ,
\end{align}
holds with the boundary condition $\psi(\pm k_{\max})=0$. But this is just the ground state problem of a particle in a box with length $2k_{\max}$ and this is solved  by
\begin{align}
\label{mls}
\psi(k)= \frac{1}{\sqrt{k_{\max}}} \cos \left( \frac{\pi}{2} \frac{ k}{k_{\max}} \right)\ ,
\end{align}
so that
\begin{align}\label{minlength}
l_{\min}^2=\frac{\pi^2}{4 k_{\max}^2}.
\end{align}
Note that all these results are completely general and hold for all modifications that satisfy the conditions (i) - (iii) of section~\ref{heisenalg}. As such they not only generalise previous results obtained in \cite{spindel,dorsch}, but also allow to drastically improve earlier approaches as they provide a means to straightforwardly compute optimal uncertainty relations and minimal lengths.

To illustrate this point, let us consider the most studied modification
\begin{equation}
f(\p)=1+\beta \p^2,
\end{equation}
for which a value for minimal length and its corresponding quantum state is already known, while the uniqueness of the construction of $\x$ and $\p$ has been left open. In \cite{kempf1995hilbert} the authors also provide a state-independent uncertainty relation. Using our results we can directly prove that this uncertainty relation is actually optimal. Additionally, we show how, from such an uncertainty relation, one can easily retrieve the aforementioned results which were previously obtained in a much more mathematically involved manner. The purpose of this example is therefore to show the validity of our results and to provide a step-by-step recipe to compute uncertainty relations and minimal lengths by making use of the main results presented so far in this paper.\\
As a first step, note that $f(\p)$ satisfies the requirements (i) - (iii). Hence we know by Theorem \ref{thm:mainthm} that the modified momentum operator $\p$ must be a hermitian operator that satisfies the differential equation \eqref{fkeq}. This yields
\begin{equation}
p(k)=\frac{1}{\sqrt{\beta}} \tan(\sqrt{\beta} k)\ .
\end{equation}
Also, by \eqref{eq:kmax} we can compute the momentum cut-off,
\begin{equation}
k_{\max}=\frac{\pi}{2\sqrt{\beta}} \ .
\end{equation}
The optimal state-independent uncertainty relation is characterised by the trade-off curve of the uncertainty region $\mathcal{U}$, the exact form of which depends on the modification. By Theorem \ref{thm:convex} we know that the states $\psi_{\lambda}$ parametrising this trade-off curve are ground states of the modified harmonic oscillator with Hamiltonian $\H_{\lambda}=\lambda \x^2+(1-\lambda)\p^2$, i.e.
\begin{equation}
\H_{\lambda}\psi_{\lambda}%=\lambda \x^2+(1-\lambda)\p^2\psi_{\lambda}
=u(\lambda) \psi_{\lambda} \ .
\end{equation}
We rewrite this condition by explicitly inserting the parameter $\beta$ to render all terms adimensional. By dividing by $\lambda$ we then have
\begin{equation}
 \left(\frac1\beta \,\x^2 + \frac{1-\lambda}\lambda \beta\,  \p^2 \right) \psi_\lambda = \frac{u(\lambda)}\lambda \psi_\lambda \equiv \gamma(\lambda) \psi_\lambda\ .
\end{equation}
The ground states of these Hamiltonians correspond to vectors in the kernel of the annihilation operator
\begin{equation}
\aa_{\lambda}=\x/\sqrt{\beta}  + i \gamma_{\lambda} \sqrt{\beta}\, \p\ ,
\end{equation}
since $\H_{\lambda}$ always\footnote{Note that this ansatz \emph{only} works for a modification of a form $f(p)=\fat{1}+\beta \p^2$} satisfies
\begin{equation}
\frac{1}{\lambda}\H_{\lambda}= \frac1\beta \x^2+  (\gamma_{\lambda}^2 - \gamma_{\lambda}) \beta \,\p^2 = \aa_{\lambda}^{\dagger} \aa_{\lambda} + \gamma_{\lambda} \mathbb{I}\ ,
\end{equation}
when choosing $\gamma_{\lambda}$ such that $\gamma_{\lambda}^2-\gamma_{\lambda}=(1-\lambda)/\lambda$. Hence, we have $\aa_{\lambda} \psi_{\lambda}=0$, which translates into the differential equation
\begin{equation}
\partial_k \psi_{\lambda}(k) + \gamma_{\lambda}\sqrt{\beta} \tan(\sqrt{\beta} k) \psi_{\lambda}(k)=0\ ,
\end{equation}
with the solution
\begin{equation}\label{optstates}
\psi_{\lambda}(k) = \left( \frac\beta\pi \right)^{1/4} \left( \frac{\Gamma(1+\gamma_\lambda)}{\Gamma(1/2+\gamma_\lambda)} \right)^{1/2} \cos(\sqrt{\beta}k)^{\gamma_{\lambda}} \ ,
\end{equation}
where
\begin{equation}
\gamma_\lambda = \frac12 \left( 1+ \sqrt{1+4\frac{1-\lambda}{\lambda}} \right)\ .
\label{res-min}
\end{equation}
These states parametrise the complete trade-off curve of the uncertainty region $\mathcal{U}$ as depicted in Fig. \ref{dxdp} and therefore yield an optimal state-independent uncertainty relation for the modification $f(\p)=1+\beta \p^2$. More concretely, we can evaluate the variances $\Delta \x$ and $\Delta \p$ for these states
\begin{equation}
\Delta \x = \beta \frac{\gamma_\lambda^2}{2 \gamma_\lambda-1}\ , \,\,\,\,\, \Delta \p= \frac{1}{\beta}\frac{1}{2 \gamma_\lambda-1}\ ,\label{meanvalues}
\end{equation}
where we invoked Theorem \ref{thm:convex} to set $\langle \x\rangle=\langle \p\rangle=0$.

Notice that for $\lambda \rightarrow 0$ (i.e. $\gamma_\lambda \rightarrow \infty$), the state becomes a plane wave, while $\lambda=1$ ($\gamma_\lambda =1$) corresponds to
the maximally localized state
\begin{equation}\label{specms}
\psi(k)=\sqrt{\frac{2\sqrt{\beta}}{\pi}} \cos ( \sqrt{\beta} k ) \ ,
\end{equation}
for which $\Delta \x = l_{\min}^2 = \beta$, compare with \eqref{mls} and \eqref{minlength}.
The results \eqref{optstates} and \eqref{res-min} as well as the trade-off curve coincide with those obtained in \cite{kempf1995hilbert} (compare with Eq. (69) in that paper) with the identification
\begin{equation}
\frac{1-\lambda}{\lambda} = \frac{1}{(\beta m \omega)^2}\ . \label{lambdaomega}
\end{equation}
However, our findings greatly simplify and extend the derivation of these results, while proving uniqueness properties and the optimality of the state-independent uncertainty relation, and allowing to treat any modification $f$ that satisfies (I) - (III).

The above discussion allows for interesting physics to become visible: when looking at the trade-off curve traced out by the states $\psi(k)$ (see Fig.~\ref{dxdp}) the position variance decreases with increasing momentum variance - exactly up to the point where the frequency of the harmonic oscillator diverges, $\lambda=\gamma_\lambda=1$. At this point the state reaches the maximal possible localisation in space, the endpoint of the solid red line in Fig. \ref{dxdp}. Actually, using (\ref{meanvalues}) it is easy to check that the states (\ref{optstates}) still saturate the generalized uncertainty principle bound even for $\gamma_\lambda<1$ but they {\it do not correspond to the ground state of a harmonic oscillator}, see (\ref{lambdaomega}), but rather can be formally seen as eigenstates of a quadratic potential with {\it an imaginary frequency $\omega$}. This regime corresponds to the upper branch in Fig. \ref{dxdp} (dashed red line). When $\gamma_\lambda$ decreases, both $\Delta \x$ and $\Delta \p$ grow and diverge in the limit $\gamma_\lambda \rightarrow 1/2$. Yet, the states with any $\gamma_\lambda > -1/2$ are normalizable, so that we can associate to them an entropy in both momentum and position space, as we will see in the next section.

\section{Entropic Bounds}\label{sec:entrops}

% !TEX root = Letter-GUPWPDR1-GP.tex

In this section we introduce, compute and discuss implications of an entropic formulation of uncertainty and minimal length. Here, one might be tempted to ask why using variances is not always a good choice to quantify the uncertainty of two measurements, especially since we dedicated the whole last section to exactly this setting. The answer is that the emphasis of the previous section lies in the formulation of optimal and state-independent uncertainty relations which best describe minimal uncertainties and are always superior to statements about minimal length only or state-dependent uncertainty relations. The concept of optimal and state-independent uncertainty relations is however, completely independent of the chosen uncertainty measure: one can formulate such relations using variances as done in the last section, or  compute so-called entropic uncertainty relations as suggested by David Deutsch in his seminal paper \cite{deutsch1983uncertainty}, which will be the content of this section. Before introducing entropies, let us first clarify why variances as measure of uncertainty are problematic.

In \cite{deutsch1983uncertainty} David Deutsch argued that variances suffer from the fact that they depend on the specific ordering and labeling of measurement outcomes. To illustrate this point consider a fair coin that yields ``heads'' or ``tails'' with equal probability. To be able to quantify the uncertainty about the outcome of one toin coss in terms of variances, one needs to artificially associate real numbers to heads or tails. In other words, variances depend on the choice of the labels of the possible outcomes - to the extent that we can choose this ``measure of uncertainty'' to become arbitrarily small or large, while intuitively our uncertainty is the same, independent of the labeling.

As another example that is related to this problem let us consider a spin measurement on a spin-1 particle \cite{coles2015entropic, bialynicki2011entropic}.
Let us assume that we only know that the outcomes $\{-1,+1\}$ occur with equal probability $p_{-1}=p_{+1}=1/4$, whereas with highest probability $p_0=1/2$ we obtain outcome 0. Now imagine that we get {\it additional information} about the source telling us that we never obtain outcome zero. Our state of knowledge changes and so does the probability distribution, which now is given by $p_{-1}=p_{+1}=1/2$ and $p_0=0$. Here a ``good'' measure of uncertainty should mirror our decrease of uncertainty by not increasing during this process. However, variances do not satisfy this minimal requirement: in fact, the variance in the above example will increase by getting further information and it is easy to construct similar examples when we consider continuous observables as well.

For unbounded observables yet another problem arises: namely the variance of a random variable can diverge although the corresponding probability distribution seems to be ``located'' in some sense. Prominent examples for this are Cauchy distributions and L\'{e}vy distributions. Moreover, this effect also occurs for two of the distributions shown in Fig.~\ref{posprob}. Here, the black and the red curve correspond to distributions which appear to be ``localised'' even though their variances diverge.

All these examples are a consequence of the variance depending on the outcomes and not only on the underlying probability distribution.
In finite dimensions a well-known alternative to variances, that does not suffer from this drawbacks, are entropies \cite{deutsch1983uncertainty,WehnerWinter,maassen1988generalized}. The most prominent one is the Shannon entropy\footnote{In this work ``$\log $'' always refers to the natural logarithm.} of a discrete probability distribution $w:\mathbb{Z}\rightarrow(0,1)$,
\begin{align}
\label{shannondiscrete}
H(w):=-\sum_i w_i \log (w_i) \ ,
\end{align}
 which was introduced in the seminal work \cite{shannon}.
Later Alfréd Rényi introduced a whole family of entropies $H_{\alpha}$ \cite{renyi1961}, called Rényi-$\alpha$ entropies, which also do not suffer from the above drawbacks and contain the Shannon entropy in the limiting case $\alpha \rightarrow 1$.
In this work we will consider only the Shannon entropy and the \textit{min-entropy} $H_\infty$, which arises in the limit $\alpha \rightarrow \infty$
\begin{align}
\label{minentro}
H_\infty (w)=-\log(\max_i w_i) \ .
\end{align}
Note that these two entropies are so far only defined in a finite-dimensional setting. In the following we will define and compute minimal length in terms of Shannon entropies and min-entropies for continuous variables.

\subsection{Shannon entropy}
In \cite{shannon} Shannon presented a generalisation of \eqref{shannondiscrete} for continuous variables with a probability density $w:\mathbb{R}\rightarrow\mathbb{R}$
\begin{align}
h(w)=-\int dy \; w(y)\log\left(w(y)\right),
\end{align}
which is called the \textit{differential entropy}. This quantity can be negative and even reach the value $-\infty$, which might, on a first view, appear to be an astonishing property for an uncertainty measure. We therefore give some clarification of its meaning. Consider a continuous valued observable given by a random variable $Y$ with outcomes $y$ on the whole real line. Now assume that an experimenter tries to measure this observable with a device that has a finite operating range, lets say in an interval $I$. Assume further that her measurement device only has a finite resolution, say $\varepsilon$, which means that the device can only decide whether the outcome of a measurement is in a particular interval of length $\varepsilon$ or not. Now the experimenter can divide the operating range $I$ into bins $\Omega_\varepsilon^i$ of length $\varepsilon$ and will thus, effectively, obtain a description of her measurement by a discrete random variable, say $Y_\varepsilon^I$ with approximately $|I|/\varepsilon$ different outcomes. Here computing entropies like \eqref{shannondiscrete} or \eqref{minentro} for this random variable will give her a good description of the information theoretic uncertainty.

The experimenter might then take a better device, i. e. one with a finer resolution and a larger operating range. In this case the entropy increases because the number of possible measurement results will increase. Moreover, in the limit $|I|\rightarrow\infty$ and $\varepsilon\rightarrow 0$ the entropy will reach infinity. Nevertheless, assuming that $Y$ is distributed by $w$, for \eqref{shannondiscrete}, we can write down this limit as
\begin{align}
\lim_{\substack{|I|\rightarrow\infty\\\varepsilon\rightarrow 0}} H(Y_\varepsilon^I)=
\lim_{\substack{|I|\rightarrow\infty\\\varepsilon\rightarrow 0}} \hspace{-.5em} -\sum_i \mathbb{P}(Y,\Omega_\varepsilon^i)\log(\mathbb{P}(Y,\Omega_\varepsilon^i)) \ ,
\end{align}
where $\mathbb{P}(Y,\Omega_\varepsilon^i)$ denotes the probability of measuring a result in the bin $\Omega_\varepsilon^i$. For small $\varepsilon$ and a bin with center $y_i$ we might approximate this probability by $\varepsilon w(y_i)$ and get
\begin{eqnarray}
H(Y_\varepsilon^I)& \approx&-\sum_i \varepsilon w(y_i) \log\left( w(y_i) \right) \nonumber \\
&-&\sum_i \varepsilon w(y_i) \log(\varepsilon)\ ,
\end{eqnarray}
which gives
\begin{align}
\label{limit}
\lim_{\varepsilon\rightarrow 0}\lim_{|I|\rightarrow\infty} H(Y_\varepsilon^I)=h(w)+\infty \ .
\end{align}
Hence the quantity $h(w)$ can be understood as a deviation from infinity. However the limit, in \eqref{limit}, strongly depends on the experimenter choice of dividing the interval $I$ into equidistant bins. This choice corresponds to the assumption that, not knowing anything about $Y$, the probability of obtaining an outcome in any bin $\Omega_\varepsilon^I$ should be the same when sampling from a uniform distribution on $I$. This assumption might become controversial (see \cite{Jaynes63}) when taking the limit $|I|$ to infinity, because there is no notion of a uniform distribution on the whole real line. In the following, we will see that we will have to take care of such a choice of a reference measure when defining entropies for the modified momentum.

 At this point we should also emphasise that, even if the absolute value $h(w)$ has only a rather indirect operational meaning,  $h(w)$ is still a good quantity to judge if one distribution is ``sharper'' than another, and thus minimizing $h(w)$ will give us a good notion to characterise minimal length quantum states. Moreover, $h(w)$ can be used to compute other information theoretic quantities like the mutual information $I(A,B)=h(A)+h(B)-h(A B)$ which quantifies the correlation between two random variables $A$ and $B$.
 Here $h(A)$ denotes the Shannon entropy of the probability density of the random variable $A$.
 It was shown in \cite{shannon}, that the ``$\infty$'' term from \eqref{limit} cancels out, such that  $I(A,B)$ arises as a rigorous limit of a discrete quantity.

Let us now define and compute the corresponding minimal length in terms of the Shannon entropy. To this end, we consider
the Shannon entropy of a probability distribution obtained by measuring $\x$ on a state $\psi\in\mathcal{P}$. Assuming that we are given $\psi$ as a function $\psi(k)$ of the coordinate $k$ we can obtain its representation as a function $\phi(x)$ of the coordinate $x$ by applying a Fourier
transformation. In this case the amplitude $|\phi(x)|^2$ will correspond to a probability density on $\mathbb{R}$ normalized with respect to the measure '$dx$' and we set
\begin{align}
h_\x(\phi)=-\int_\mathbb{R} dx |\phi(x)|^2 \log |\phi(x)|^2\ ,
\end{align}
in order to define the Shannon entropy of a position measurement.
In analogy to the ``standard'' (but problematic) definition of minimal length, minimal length {\it in terms of entropies} is now defined by the minimal entropy that a probability distribution obtained by a position measurement can have \footnote{Clearly, entropy does not have the unit of a length. However, the exponential of the entropy does (see \cite{Hall} where it is shown that the exponential of entropies provide a natural way to assign ``length'' to a quantum state). For simplicity, we drop this exponential in this work.}
%In analogy to the ``standard'' (but problematic) definition of minimal length, {\it entropic} minimal length is now defined as the minimal entropy that a probability distribution obtained by a position measurement can have
\begin{align}
\Gamma_{\min}=\min_{\psi\in \mathcal{P}} h_\x(\mathcal{F}(\psi))\ .
\end{align}

Measuring the unmodified momentum $\k$ on $\psi\in\mathcal{P}$ will give us the probability distribution $|\psi(k)|^2$ such that we will define the unmodified momentum entropy as
\begin{align}
h_\k(\psi)=-\int_I dk |\psi(k)|^2 \log |\psi(k)|^2\ ,
\end{align}
where the interval $I$ ranges from $-k_{\max}$ to $+k_{\max}$.

\begin{figure}
	\def\svgwidth{0.45\textwidth}
	\centering
	\input{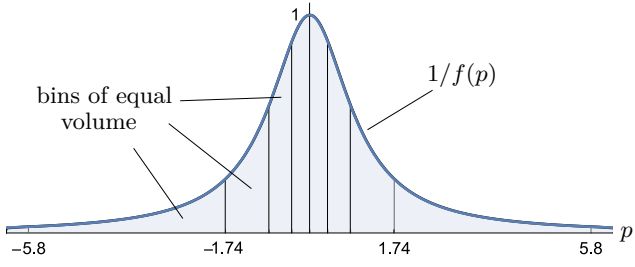}
	\caption{ Different binnings of measurement data lead to different uncertainty measures. Here we choose a binning such that all bins have the same probability in our definition of the Shannon entropy.}
	\label{bins}
\end{figure}

As mentioned in the previous subsection, the subtlety of choosing an appropriate reference measure emerges: when we represent $\psi$ as a function $\tilde{\psi} (p)$ of the coordinate $p$, the probability of measuring a certain value $p$ from an interval $(a,b)$ is given by
\begin{align}
\mathbb{P}(p\in(a,b),\p,\psi)=\int_a^b \frac{dp}{f(p)}|\tilde{\psi}(p)|^2 \ ,
\end{align}
which is no longer translation invariant, due to the scaling factor $1/f(p)$.
Here the experimenter from the above example has to adapt to this when choosing bins: one choice would be to keep on taking bins with equal length.
Another choice, the one we use in this work, is to take bins such that for all bins the probability obtained by measuring a function $\tilde{\psi}(p)$, which is constant on a particular bin, is the same (see Fig.~\ref{bins}). In this case a bin, with center $a$ and volume  $\varepsilon$, will correspond to an interval $(p(-\varepsilon/2+p^{-1}(a)),p(p^{-1}(a)+\varepsilon/2)$, where $p$ and $p^{-1}$ are obtained by representing $\p$ as a function of $\k$. By evaluating the limit of $\varepsilon\rightarrow 0$ we see that we therefore should define the entropy of a modified momentum measurement via\footnote{We also note that the measure $\mu(dp) = dp/f(p)$ guarantees that in ${\mathcal H}={\mathcal L}^2(\mathbb{R},\mu(dp))$ the $\x$ operator can be represented as $\x := i f(p) d/dp$. Equivalently, in the $k$ representation the measure is simply $dk$ and $\x := i d/dk$, see \cite{kempf1995hilbert}.}
\begin{align}
\label{hp}
h_\p(\tilde{\psi})=-\int_\mathbb{R}\frac{dp}{f(p)}|\tilde{\psi}(p)|^2 \log |\tilde{\psi}(p)|^2 \ .
\end{align}
Notice that for $\p$ and $\k$ both choices will lead to the same definition of an entropy. Furthermore,  $h_\p$ as defined above has the nice advantage that it arises from $h_\k$ by an integral substitution and thus does not change, i. e.
\begin{align}
h_\p(\tilde{\psi})=h_\k(\psi)\ .
\end{align}
Thus, the optimization of $h_\p$ over all states represented as functions of $p\in\mathbb{R}$ amounts to optimizing
$h_\k$ over all functions of $k\in I$.
Finally, $h_\p$ also depends on $f(p)$ only through the cut-off parameter $k_{\max}$. All results that can be shown for an arbitrary $k_{\max}$ are therefore valid for arbitrary modifications $f(p)$.

Having defined the entropic uncertainty measures, we can now compute the corresponding uncertainty relations. To this end, we consider all possible pairs $(h_\x(\phi),h_\k(\psi))$ which are attainable by $\psi(k)\in\mathcal{P}$ where $\phi(x)=\mathcal{F}[\psi](x)$.  We recall that in the standard scenario, $\beta=0$, both momentum and position entropies can become arbitrarily small or large if one consider sequences of states which converge (in the distribution sense) to the $\x$ or $\p$ eigenfunctions. In this case the ``physical'' states satisfies the Bialynicki-Birula (BB) bound \cite{bialynicki1975uncertainty,Bialynicki-B:1984}, obtained by the Babenko-Beckner inequality \cite{beckner1975inequalities},
\begin{align}
h_\k(\psi)+ h_\x(\phi) \geq \log( \pi e)\ ,
\end{align}
 which is again saturated, as for the product of variances (see Theorem \ref{thm:convex}), by the ground states of harmonic oscillators with arbitrary frequency $\omega$.

 For a modified algebra, the existence of a momentum cut-off is expected to imply a minimal value for $h_\x$. Before  discussing its value, it is worth noticing that the representation of a modified algebra also implies a {\it maximal entropy $h_p$ in momentum space}. This maximal entropy is attained for any series of functions converging to the uniform distribution on $I=[-k_{\mathrm{max}},k_{\mathrm{max}}]$, and reads
\begin{align}
\label{hpmax}
\max_{\psi\in\mathcal{P}}h_\k(\psi)&= -\int_{I}\frac1{|I|}\log\left(\frac1{|I|}\right)dk =  \log(2k_{\max}).
\end{align}
In particular for the family of states from \eqref{optstates} this bound is obtained if we take the limit $\gamma_{\lambda}\rightarrow 0$, (see Fig.~\ref{propk}).

If we combine \eqref{hpmax} with the (BB) bound we find that
\begin{align}
  h_\x(\phi)\geq 1-\log\left(\frac{2k_{\max}}{\pi} \right),
\label{minhx}
\end{align}
implying a lower bound on the entropy $h_\x$.
For a modification $f(\p)=1+\beta\p^2$ this reads
\begin{align}
\label{hpmax2}
h_\k(\psi)\leq\log(\pi)-\frac12 \log(\beta)
\end{align}
and
\begin{align}
  h_\x(\phi)\geq 1+\frac12\log(\beta).
\end{align}
Yet, this bound is not optimal, because the (BB) bound, the dashed line in Fig. \ref{fig1}, becomes tight only on Gaussian functions, see \cite{lieb}. Still however, all entropy pairs have to lie ``above'' this line.

\begin{figure}
\def\svgwidth{0.5\textwidth}
\centering
\input{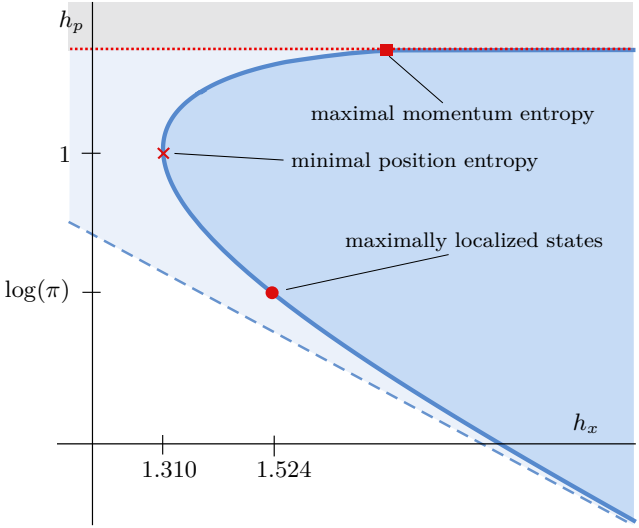}
\caption{Entropic uncertainty region in the $h_\x -h_\p$ plane for the modification $f(\p)=1+ \beta \p^2$.
The dashed line is the Bialynicki-Birula bound which is still valid in our setting but not longer optimal. The horizontal dotted line shows the upper bound on the entropy $h_\p$. The marked points correspond to the $h_\x - h_\p$ pairs for the maximally localized states, the state of minimal position entropy and the one with maximal momentum entropy, respectively. Physical states are conjectured to lie on the right of the solid line and below the dotted horizontal line.}
\label{fig1}
\end{figure}

In analogy with the standard result for the entropy bound, and motivated by our results of the previous section about the optimal uncertainty relation in terms of variances, we conjecture that the analogue of the
(BB) curve corresponds to the states $\psi_{\lambda}(k) \propto \cos(\sqrt{\beta} k)^{\gamma_\lambda}$, which we saw  for $\gamma_\lambda\geq1$ represents the ground state of the deformed harmonic oscillator, while for $\gamma_\lambda<1$ they still saturate the optimal bound in terms of variances, but can be seen as eigenstates of an imaginary frequency oscillator. This curve is shown as the convex solid line in Fig.~\ref{fig1}.

We were unable to obtain an analytic form for the corresponding values of $h_\x$, which have been computed numerically. On the other hand, we can give a simple expression for $h_\k$ in terms of special functions. To this end let us observe that the states (\ref{optstates})
can be also written as
\begin{align}
  \psi_{\lambda}(k) &= \frac1{\sqrt{\kappa(\gamma)}} \exp \left( - \gamma   \,\int_0^k p(k') dk' \right),
\label{psiu}
\end{align}
where we omit the argument $\gamma\equiv \gamma_\lambda$ for readability and
\begin{align}
  \kappa(\gamma) &= \sqrt{\frac \pi \beta} \, \frac{ \Gamma(1/2 + \gamma)}{\Gamma(1+ \gamma)}  .
\label{norm}
\end{align}
Thus, their entropy $h_\k$ reads
\begin{align}
  h_\k &= \int_I dk \frac1{\kappa(\gamma)} \exp \left( - 2 \, \gamma \int_0^k p(k') dk' \right) \times \nonumber \\
    &\times \left(2 \,\gamma \int_0^k p(k') dk'+ \log(\kappa(\gamma)) \right),
\end{align}
or
\begin{align}
  h_\k &= \log(\kappa(\gamma)) - \frac1{\kappa(\gamma)}  \gamma \frac{d}{d\gamma} \kappa(\gamma).
\end{align}
Using (\ref{norm}), we finally find
\begin{align}
  h_\p = h_\k &= \log \left( \sqrt{\frac \pi \beta} \, \frac{ \Gamma(1/2 + \gamma)}{\Gamma(1+ \gamma)} \right) \,+ \nonumber \\
    & + \, \gamma\, {\cal N}(\gamma)-\gamma \,{\cal N}\left( \gamma-\frac12 \right) ,
\end{align}
with ${\cal N}(\gamma)$ the harmonic numbers.

This new boundary curve, shown in Fig. \ref{fig1}, can be divided into three parts, corresponding to different properties of the optimal states (\ref{psiu}) (we fix here $\beta=1$):\\

$\bullet$ for large positive values of $h_\x$ and large negative $h_\p$ the curve asymptotically reaches the (BB) bound, as expected (the role played by the cut-off $\beta$ can be neglected in this regime). As $h_\x$ decreases the curve starts bending and leaves the (BB) straight line. The solid circle shown in Fig.~\ref{fig1} denotes the states of maximal localization in terms of variances studied in the previous section, for which
\begin{align}
h_\k= \log(2 \pi /e)\text{ and } h_\x\simeq 1.374.
\end{align}
\begin{figure}[t]
  \def\svgwidth{0.38\textwidth}
  \centering
  \input{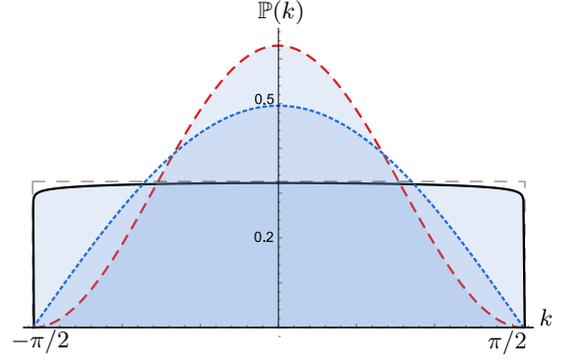}
  \caption{
  	 Minimal length states $\eqref{optstates}$ for $\beta=1$ and $\gamma_{\lambda}=1$ (red, dashed, variance), $\gamma_{\lambda}=1/2$ (blue, dotted, Shannon entropy) and $\gamma_{\lambda}\approx 0$ (black, solid, min-entropy) as in Fig.~\ref{posprob} but in $k$-representation. These minimal length states differ from the states with maximal momentum uncertainty in the case of variances and Shannon entropy. For min-entropies however the minimal length state also yields maximal possible momentum uncertainty.}
  \label{propk}
\end{figure}
Till this point the optimal states are the $\psi_\lambda(k)$ with $\gamma_\lambda\geq1$, i.e. the ground states of the deformed harmonic oscillators;\\

$\bullet$ as $\gamma$ falls below unity, the value of $h_\x$ continues to decrease, until $\gamma=1/2$, which corresponds to the cross in Fig.~\ref{fig1}. We have for this state
\begin{align}
  h_\k= 1\text{ and } h_\x\simeq 1.310 ,
\end{align}
which represents the state of minimal entropy in position.  We see that {\it minimal length in terms of entropy is not equivalent to minimal length in terms of variances}. Indeed, all optimal states in this branch of the curve have finite $\x$ and $\p$ variances, with the exception of the point $\gamma=1/2$, see the previous section. The minimal position entropy can thus, be attained by considering a sequence of such states with $\gamma\rightarrow1/2$. In this limit the variances of both $\x$ and $\p$ diverge (see Fig.~\ref{dxdp}), while their entropies stay finite; \\

$\bullet$ for even smaller values of $\gamma<1/2$, $h_\x$  increases and so does $h_\k$ until it reaches its maximal value, corresponding to a constant wave function in $I$ ($\gamma=0$), up to an arbitrary $k$ dependent phase. This state is shown in the entropy plane as the filled square in Fig.~\ref{fig1}
\begin{align}
  h_\k= \log(\pi)\text{ and }h_\x\simeq 1.524 ,
\end{align}
This part of the curve,  $1/2 > \gamma > 0$, corresponds to normalizable wave functions but with infinite variances for both $\x$ and $\p$. Thus, it is an open boundary for "physical" states, if by so we mean states which are in the domain of position and momentum operators.\\

The solid line and the part of the horizontal line $h_\k= \log(\pi)$ starting from the filled square bounds a convex region. Our conjecture  is that this is in fact, the region in which all entropy pairs for states from $\mathcal{P}$ have to lie in. We cannot present here a proof of this, but we have performed a numerical scan of pairs $h_\x -h_\k$ corresponding to a random sample of $100000$ states built from superposing low excited states (see Eq. (\ref{basis}) in next subsection).

\subsection{Min-entropy}
Considering min-entropies in order to quantify the uncertainty of a measurement is meaningful for several reasons: on one hand $H_{\infty}$ (see \eqref{minentro}) sets a lower bound on all other entropies within the Rényi-$\alpha$ family, i.e. $H_{\infty}\leq H_{\alpha}$ for all $\alpha\in\mathbb{R}_+$. On the other hand it has a direct operational interpretation in the following sense: consider again an  experimenter who now samples a (discrete) random variable $Y_\varepsilon^I$ and assume that she tries to guess the outcome of a particular sample. In this case the quantity $\exp\left(-H_{\infty}(Y_\varepsilon^I)\right)$ gives the highest guessing probability she can attain when doing so.
In the same spirit, the min-entropy $h_\infty(w)$, i.e. the continuous counterpart/analogue of $H_{\infty}$, of a probability density $w$ is given as
\begin{align}
h_\infty (w) = - \log \left(  \operatorname{ess}\sup_x | w(x) | \right).
\label{minentro2}
\end{align}
Here the essential supremum $ \operatorname{ess}\sup_x | w(x) |$ \eqref{minentro2} is needed to correctly deal with sets of measure zero.
In particular, when regarding a partitioning of an interval into bins in the limit of vanishing bin size, the essential supremum arises as the natural limit of a supremum over finite bins.  Operationally, this quantity can therefore be understood as follows: consider we choose a partitioning into bins $\Omega_i$ of size $|\Omega_i|$ in a measurement scenario in which the outcomes are distributed according to a probability density $w(x)$. If the experimenter was to guess the bin in which the next measurement outcome will occur, the success probability $\mathbb{P}_{i}=\int_{\Omega_i} w(x)$ is upper bounded by $\mathbb{P}_{i}\leq c^* |\Omega_i|$ with $c^*$ some constant that may depend on the binning sizes, but is independent of the bin $i$. The min-entropy characterises the smallest constant $c^*$ for all possible binning sizes which is exactly given by $\operatorname{ess}\sup_x | w(x) |$.

Using the framework developed in this paper we are able to directly compute the minimal length in terms of min-entropy
\begin{align}
  \label{mindef}
  \Gamma_{\min}^{\infty}&:=\inf_{\psi\in \mathcal{P}} \left[-\log \left( \operatorname{ess}\sup_x |\phi(x)|^2 \right) \right]\\
  &= -\log \left( \sup_{\psi\in \mathcal{P}}\operatorname{ess}\sup_x |\phi(x)|^2 \right)\ ,
\end{align}
as follows: consider the ``particle in a box'' basis for the interval $[-k_{\max},k_{\max}]$, i.e.
\begin{align}
  \psi_n(k)&=\frac{1}{\sqrt{k_{\max}}} \sin\left[\frac{\pi n}{2 k_{\max}}(k-k_{\max})\right] \ ,
  \label{basis}
\end{align}
with its Fourier transform
\begin{align}
  \phi_n(x)&=\left(\frac{\pi n^2 k_{\max}}{2}\right)^{1/2} \frac{\sin(k_{\max} x - \frac{\pi n}{2 })}{k_{\max}^2 x^2 - \frac{\pi^2 n^2}{4}} \mathrm{e}^{-\frac{\pi n}{2} i } \ .
\end{align}
We can then decompose any $\psi\in \mathcal{P}$ by $\psi(k)= \sum_n \alpha_n \psi_n(k)$ with a square summable sequence $\alpha_n$.
In the same way,
its Fourier transform $\phi=\mathcal{F}[\psi]$ reads $\phi(x)= \sum_n \alpha_n \phi_n(x)$.
We therefore have
\begin{align}
  \label{min2}
  \Gamma_{\min}^{\infty}&= -\log \left( \sup_{\alpha: ||\alpha||_{l^2}=1}\sup_x \big|\sum_n \alpha_n \phi_n(x)\big|^2 \right) \ ,
\end{align}
where we used the fact that the $\phi_n(x)$ are smooth to replace the essential supremum with the ordinary supremum. The term $|\sum_n \alpha_n \phi_n(x)|^2$ can be rewritten as a scalar product $|\langle \alpha_n|\phi_n(x)\rangle_{l^2}|^2$ which, due to the Cauchy-Schwarz inequality, is maximised by $\langle \phi_n(x)|\phi_n(x)\rangle^2$.
Also note that by choosing the phase of $\psi$ appropriately we can without loss of generality set the maximum of the $\phi_n(x)$ to be at $x=0$.
Some algebra shows that $\sum_n |\phi_n(0)|^2 =k_{\max}/\pi$, proving that the minimal length in terms of min-entropy is given by (see Fig. \ref{fig:hmin})
\begin{align}
\label{min4}
\Gamma_{\min}^{\infty}&= -\log(k_{\max}/\pi) \ .
\end{align}
In particular,  whenever the variance-based minimal length is normalised, i.e. $l^2_{\min}=1$, the minimal length in terms of min-entropy is given bys
\begin{align}
\Gamma_{\min}^{\infty}=\log 2 \equiv 1\,[\mathrm{bit}] \ ,
\end{align}
i.e. the {\it minimal length is exactly one bit}.
\begin{figure}
	\centering
\def\svgwidth{0.4\textwidth}
\input{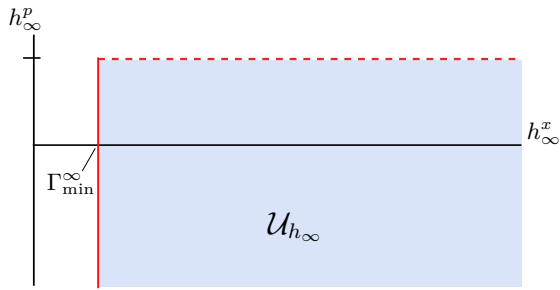}
\caption{Entropic uncertainty region for min-entropies $h_\infty$. The minimal length $\Gamma_{\min}^{\infty}$ is analytically computed in \eqref{min4}. As for Shannon entropies, the min-entropy of momentum attains an upper bound (dashed line), $\max h_{\infty}^p=\log(2 k_{\max})$, for the uniform distribution on the interval $I$. Numerical evaluation indicates that any entropy pair satisfying these two constraints can be attained. Hence the uncertainty region $\mathcal{U}_{h_{\infty}}$ and the corresponding trade-off curve are given by the shaded region and the solid line, respectively.}
\label{fig:hmin}
\end{figure}

%%%%%%%%%%%%%%%%%%%%%%%%%%%%%%%%%%%%%%%%%%%%
%
%
%
%
\par\noindent

%%%%%%%%%%%%%%%%%%%%%%%%%%%%%%%%%%%%%%%%%%%%
%%%%%%%%%%%%%%%%%%%%%%%%%%%%%%%%%%%%%%%%%%%%
%
%
%
%
\par\noindent

%%%%%%%%%%%%%%%%%%%%%%%%%%%%%%%%%%%%%%%%%%%%%%%
\section{Concluding Remarks and Future Directions}
Minimal length is to be understood as the minimal possible uncertainty about position measurements due to a modification of the Heisenberg algebra. Nevertheless, two questions remained to be settled: first, under what assumptions does a modification actually {\it imply} minimal length? This line of thought is in contrast to previous attempts where only the existence of operators was shown which both show minimal length and satisfy the algebra. Despite its importance this question has not been answered so far to the best of our knowledge. One of our main results, Theorem \ref{thm:mainthm}, clarifies this issue and proves under physically well-motivated assumptions the uniqueness of such operators. As such it justifies results obtained in previous literature.

Second, one should operationally motivate the choice of uncertainty measure used to define minimal length, since such measures are far from unique. The choice to use variances is in this context not at all an obvious one. Instead entropic measures are known to have an operational interpretation while not suffering from a number of severe deficiencies that variances show if interpreted as a measure of uncertainty. We therefore introduce and show implications of an entropic formulation of minimal length.

Our main results can be summarized as follows: the states which correspond to the maximal possible localization in the $x$ space in terms of variance $\Delta \x$ and of the Shannon entropy $h_{\x}$ or min-entropy $h_{\infty}^x$, are different, showing that the physical notion of minimal length itself depends on the particular choice adapted to its operational meaning. The min-entropy lower bounds all other R\'{e}nyi-$\alpha$ entropies; likewise the minimal length in terms of min-entropy is a lower bound to all other entropic minimal lengths and turns out to be exactly one bit. On the other side, entropic bounds also show another novel feature, namely the presence of an upper bound on the entropy in $p$ space $h_\p$ or $h_\infty^p$ for "physical states". In other words deformations of the standard Heisenberg algebra leading to a minimal length lead to a lower limit on the information we can get on a given state in terms of its momentum distribution. This is not the case if we use variances to quantify this information, since the value of $\Delta p$ can be arbitrarily large.

We have established a framework that allows to compute optimal and state-independent uncertainty relations for modified Heisenberg algebras (see e.g. Theorem \ref{thm:convex} and Corollary \ref{cor:method}). Optimal and state-independent uncertainty relations directly yield minimal lengths, but contain much more information as they describe the uncertainties for all quantum states in the modified algebra.

One of the natural generalizations would be to extend our setting to higher dimensions. We hope that the study of the entropic uncertainties in three and four dimensions may shed some light on the plausible connection that might exist between previous limits such as bound on information storage (holographic bound)\cite{bekenstein1981universal, hooft1993dimensional, susskind1995world}, bounds on information scrambling/chaos\cite{Maldacena:2015waa}, bounds on quantum evolution \cite{anandan1990geometry,aharonov1961time,margolus1998maximum} and quantum computation/complexity\cite{lloyd2000ultimate,brown2015complexity,lloyd2012quantum}.

Furthermore our new entropic bounds on $h_\x$ and $h_\p$ are directly applicable to the entropic steering inequalities formulated in \cite{Schneeloch} and thus lead to new limitations on the amount of entanglement that can be shared between two distant parties governed by a modified Heisenberg algebra.

It is worth noting that one of our main results, Theorem \ref{thm:mainthm}, directly links the study of quantum physics in a modified algebra to the study of classical information processing of bandlimited analog signals. Quantum states are then replaced by the complex current in a wire, the considered observables change from position and momentum to time and frequency (again linked by Fourier transformation). A momentum cut-off due to a modified algebra can therefore be understood as a frequency (or ``band'') limitation of the complex current. We already exploited this analogy to some extent by considering operationally more relevant uncertainty measures as was first done in the well-studied field \cite{Slepian1,Slepian2,Slepian3} of classical information processing. By considering the Nyquist sampling theorem steps into this direction have been  taken by \cite{kempf-sampling}.
However, we strongly believe that one can obtain many more fundamental insights in the field of modified algebras by just transferring results and concepts from classical information theory.

%%%%%%%%%%%%%%%%%%%%%%%%%%%%%%%%%%%%%%%%%%%%%%%

\par\noindent

%
%
%
%
%%%%%%%%%%%%%%%%%%%%%%%%%%%%%%%%%%
\no {\bf Acknowledgments}
%%%%%%%%%%%%%%%%%%%%%%%%%%%%%%%%%%
%
%
\par\noindent
The authors are grateful to Saurya Das, Reinhard F. Werner, Tobias Osborne, and Inken Siemon for discussions related to this work. They especially thank Lars Dammeier for perpetually emphasising the importance of entropic uncertainty relations. They also thank Achim Kempf for reading a preliminary version of the manuscript and valuable comments. Special thanks also to Michael Hall for helpful comments on the definition of length in terms of entropies. This work was supported by the European grants DQSIM, QFTCMPS and SIQS and by the cluster of excellence EXC 201 Quantum Engineering and Space-Time Research. GM acknowledges the support of INFN, I.S. TASP. RS acknowledges the support from the BMBF project Q.com-q.    

%%%%%%%%%%%%%%%%%%%%%%%%%%%%%%%%%%%%%%%%%%%%%%%%%%%%%
%
%
%
%%%%%%%%%%%%%%%%%%%%%%%%%%%%%%%%%%
\begin{appendix}
	\section*{ Appendix}
	% !TEX root = Letter-GUPWPDR1-GP.tex
\section{Proof of Theorem \ref{thm:mainthm}}
  Starting point of our examination is the Hilbert space $\mc H:=\mc L^2(\mb R, \mr dx)$, witht the usual Borel-Lebesgue measure, together with the standard position operator $\fat x$ with dense domain $\mc D(\fat x)=\{\psi\in\mc H\,|\, \|\fat x\psi\|_2<\infty\}$.
  Recall that $\fat x$ is self-adjoint on $\mc D(\fat x)$ and has continuous spectrum $\sigma(\fat x)=\mb R$.
  The standard momentum operator $\fat k$ on $\mc H$ with its domain $\mc D(\fat k)$ of weakly differentiable functions in $\mc H$ is related to $\fat x$ by the Fourier-transform on $\mc H$ and it is self-adjoint. They fulfill the standard Heisenberg commutation relation $[\fat x,\fat k]\psi=i\psi$ for $\psi$ in a dense domain of analyic vectors for both operators.
%this is an easter egg

  Our aim is to examine the situation where there is another linear and self-adjoint operator $\fat p$ on $\mc H$, possibly unbounded and with domain $\mc D(\fat p)$ such that there exists a sufficiently well-behaved function $f:\mb R\to\mb R$ such that
  \begin{align}\label{eq:defheiser}
    [\fat x,\fat p]\psi=if(\fat p)\psi
  \end{align}
  for some sensible choice of $\psi\in\mc H$.

  More precisely we even expect the operator $\fat p$ having domain $\mc D(\fat p)$ just on a sub-Hilbert space $\mc P\subset\mc H$ and such that it is self-adjoint only in $\mc P$ and the commutation relation~\eqref{eq:defheiser} only holds true for suitable vectors $\psi\in\mc P$.
  Even more, we want to show that under suitable assumptions on the function $f$ we can find a function $p:I\subseteq\mb R\to\mb R$ such that $\fat p=p(\fat k)$ and $I$ is a possibly unbounded interval. The Hilbert space $\mc P$ can then be identified with $\mc L^2(I)$.

  It is important here that we consider an embedding of $\mc P$ into the larger space $\mc H=\mc L^2(\mb R)$ since this allows us later to interpret the operator $\fat p$ as a modified version of the momentum operator.

  \subsection{An example}
    We start with an example from which we extract the essential structure we use later in our assertion.
    For this consider, as above, the Hilbert space $\mc H=\mc L^2(\mb R,\mr dx)$, and standard position and momentum operators $\fat x$ and $\fat k$ satisfying the standard canonical commutation relations $\fat x\fat k\psi-\fat k\fat x\psi=i\psi$ on a common dense set $\mc D$ of analytic vectors $\psi$.
    Denote by $E_{\fat k}$ the spectral measure of $\fat k$. Note that both operators, $\fat k$ and $\fat x$, have simple spectrum $\sigma(\fat k)=\mb R=\sigma(\fat x)$. That is, they are unitarily equivalent to a position operator on some $\mc L^2(\mb R,\mu)$, where $\mu$ is a suitable measure \cite{AkhiezerGlazman,Schmuedgen:2012}.
    We consider the representation, where $\fat k$ (the momentum operator) acts on $\mc H$ as the position operator, i.e. $\forall\psi\in\mc D(\fat k): (\fat k\psi)(k)=k\psi(k)$, and where $\mc D(\fat k)$ is the domain of $\fat k$.

    Now consider the real-valued function $p$ on $\mb R$ given by $p(k)=\mr{tan}\circ\chi_{I}(k),k\in\mb R$, where $\chi_{I}$ is the characteristic function on the interval $I=[-\frac\pi2,\frac\pi2]$.
    From this we get a self-adjoint operator
    \begin{align*}
      \fat p  &:= p(\fat k)=\int_{\mb R}p(\lambda)E_{\fat k}(\mr d\lambda).
    \end{align*}
    with domain $\tilde{\mc D}(\fat p)=\{\psi\in\mc H\,|\,\int_{\mb R}|p(\lambda)|^2\mr d\langle\psi,E_{\fat k}(\lambda)\psi\rangle<\infty\}$
    and $\mr d\langle\psi,E_{\fat k}(\lambda)\psi\rangle$ is the unique measure (on the Borel-$\sigma$-algebra $\mc B(\mb R)$) given by $\langle\psi,E_{\fat k}(\cdot)\psi\rangle$.

    Let $\mc P:=\mc L^2(I,\mr dx)\subset\mc H$ and denote by $P$ the projection $\mc H\to\mc P$. Then it is easy to check that $\fat p=p(P\fat kP)$, and, when regarded as an operator, $\fat p$ is self-adjoint on the domain $\mc D(\fat p)=\tilde{\mc D}(\fat p)\cap\mc P$.
    Note that the operator $P\fat kP$ is bounded, since the function $k\mapsto k$ is bounded on $I$.

    Denote the operator $P\fat kP$, when regarded as operator on the Hilbert space $\mc P$, by $\fat k_P$.
    Then $\fat k_P$ has simple spectrum $\sigma(\fat k_P)=\mb R$, and the self-adjoint, unbounded operator $\fat p$, as operator on $\mc P$ has spectrum $\sigma(\mb R)$.
    Furthermore, as such, $\fat p=p(\fat k_P)$, and since the function $\mr{tan}$ is bijective on $I$, $\sigma(\fat p)$ is simple. Note that, when regarded as operator on $\mc H$ the spectrum is not simple anymore, since zero is then an infinitely degenerate eigenvalue of $\fat p$.
    In any case the spectral projections of $\fat p$ commute with the spectral projections of $\fat k$ and $\fat k_P$, i.e. the operators $\fat k_P$ and $\fat p$ strongly commute.

    Since the tangent is analytic the dense subspace of analytic vectors $\mc D$ for $\fat p$ in $\mc P$ is contained in the space of analytic vectors for $\fat x$. For example, the smooth, compactly supported functions in $I$ that exponentially decay at the boundary of $I$ are analytic for $\fat p$, and also for $\fat x$.
    Moreover, for $\psi\in\mc D$ we have that $\fat x\fat p\psi-\fat p\fat x\psi=i(1+\fat p^2)\psi$.

    Now let $U:\mb R\times\mc H\to\mc H$ be the unitary 1-parameter group of translations generated by $\fat x$.
    Then for $\Omega\in\mc B(\mc R)$ and $t\in\mb R$ we have that $U_tE_{\fat k}(\Omega)U_{-t}=E_{\fat k}(\Omega+t)$, where $\Omega+t=\{x+t\,|\, x\in\Omega\}$.
    If $g:\mb R\to\mb R$ is an analytic function we get $(U_tE_{\fat k}(\Omega)U_{-t}\psi)(k)=\chi_{\Omega}(g(k-t))\psi(k)=\chi_{\Omega_g}(k)\psi(k)$, where we set $\Omega_g:=\{k\in\mb R\,|\,g(k-t)\in\Omega\}=g^{-1}(\Omega)+t$.
    Hence $U_tE_{\fat p}(\Omega)U_{-t}=E_{\fat k}(\Omega_p)=E_{\fat p}(\Omega')+Q$ with $\Omega':=p(\Omega_{p})\cap[-\frac\pi2,\frac\pi2]$ and $Q=E_{\fat k}(\Omega_{p}\setminus[-\frac\pi2,\frac\pi2])$.
    This can also be used to see that for all $\epsilon\leq\frac\pi2$ there exists an $\epsilon'>0$ such that $E_{\fat k}([-\epsilon,\epsilon])=E_{\fat p}([-\epsilon',\epsilon'])$ by simply choosing $\epsilon'=\arctan\epsilon$.

  \subsection{The general case}
    \begin{assumptions}
      Given $\mc H$, $\fat x$, $\fat k$ as before. Let $f:\mb R\to\mb R$ be a function with the following properties:
      \begin{itemize}
        \item
          $f$ is smooth.
        \item
          $f(0)=1$.
        \item
          $f$ is symmetric, i.e. $\forall p\in\mb R:f(-p)=f(p)$.
        \item
          $f$ is convex on $\mb R^+$.
      \end{itemize}
      Furthermore there exists a closed subspace $\mc P\subseteq\mc H$ with projection $P:\mc H\to\mc P$ such that $(\fat p,\mc D(\fat p))$ satisfies:
      \begin{itemize}
        \item
          $\mc D(\fat p)\subset\mc P$
        \item
          $(\fat p,\mc D(\fat p))$ is a self-adoint, linear operator on $\mc P$
        \item
          The spectrum $\sigma(\fat p)$ is continuous, coincides with $\mb R$, and is simple. I.e. there exists a vector $\psi\in\mc P$ such that for any other vector $\phi\in\mc P$ there exists a function $f\in\mb L^2(\sigma(\fat p),\mu)$ such that $\phi=\int_{\mb R}f(t)\mr dE_{\fat p}(t)\psi$ and $\mu$ is the measure given by $\mu(\Omega)=(E_{\fat p}(\Omega)\psi,\psi)$.
        \item
          There exists a dense subspace $\mc D\subset\mc P$ such that:
          $\forall\psi\in\mc D:\fat x\fat p\psi-\fat p\fat x\psi=if(\fat p)\psi$.
        \item
          For all $\psi\in\mc D$ and for all $n\in\mb N$ it holds that $\fat x^n\psi\in\mc D$ and $\fat p^n\psi\in\mc D$. In other words, $\mc D$ is a dense set of analytic vectors in $\mc P$ for both, $\fat x$ and $\fat p$.
        \end{itemize}
      We say that the objects $(f,\fat p,\mc D(\fat p),\mc P)$ are \emph{admissible} if they satisfy all of the above assumptions.
    \end{assumptions}
    The existence of dense subsets of analytic vectors for $\fat x$, respectively $\fat p$ follows from the simplicity of their spectra \cite[Theorem 69.3]{AkhiezerGlazman}. We want, however, that there exists a \emph{common} set of analytic vectors for both operators, which is contained just in $\mc P$.

    By \cite[Theorem 69.2]{AkhiezerGlazman} (or \cite[Proposition 5.18]{Schmuedgen:2012}) the assumption that $\fat p$ has simple spectrum implies that $\mc P\cong\mc L^2(\mb R,\mu)$, that is, vectors $\psi\in\mc P$ correspond to (equivalence classes of) functions $\psi\in\mc L^2(\mb R,\mu)$. Furthermore, by abuse of notation, $\forall\psi\in\mc D(\fat p): (\fat p\psi)(p)=p\psi(p)$.
    \begin{definition}[\cite{ReSi:1970}]
      Let $A$ and $B$ be unbounded self-adjoint operators on some Hilbert space $\mc H$ and $E_A$ and $E_B$ their projection-valued measures over $\mb R$. We say that $A$ and $B$ \emph{strongly commute}, if for all measurable sets $\Omega,\Omega'\subset\mb R$ the according spectral projections commute, i.e. $E_A(\Omega)E_B(\Omega')=E_B(\Omega')E_A(\Omega)$.
    \end{definition}
    \begin{theorem}\label{thm:thm1}
      Given the standard position and momentum operators $\fat x$ and $\fat k$ on $\mc H=\mc L^2(\mb R)$ and given $(f,\fat p,\mc D(\fat p),\mc P)$ admissible. Denote by $\mc B(\mb R)$ the Borel-$\sigma$-algebra on $\mb R$ and by $E_{\fat k},E_{\fat p}$ the spectral measures for $\fat k$ and $\fat p$, respectively.

      Consider the following conditions:
      \begin{itemize}
        \item
          There exists $\epsilon_0$ such that $\forall\,0<\epsilon\leq\epsilon_0\,\exists\epsilon'>0$ and $\exists\delta>0$ it holds:
          \begin{align}\label{eq:comm}
            \|E_{\fat k}([-\epsilon,\epsilon])-E_{\fat p}([-\epsilon',\epsilon'])\|<\delta
          \end{align}
          and $\delta\in O(\epsilon^3)$.
        \item
            Given the strongly-continuous 1-paramter group $U_t$ of translations generated by $\fat x$ we require that for any measurable set $\Omega\in\mc B(\mc R)$ there exists a measureable set $\Omega'\in\mc B(\mb R)$ and a projection $Q\leq\mb I-P$, where $P$ is the projection onto $\mc P$, such that $U_{t}E_{\fat p}(\Omega)U_{-t}=E_{\fat p}(\Omega')+Q$.
      \end{itemize}
      Then it follows that $\fat k$, when restricted to $\mc P$, has a self-adjoint extension $\fat k_P$ on $\mc P$ and that $\fat p$ and $\fat k_P$ strongly commute as operators on $\mc P$.
    \end{theorem}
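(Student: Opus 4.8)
The plan is to reconstruct $\fat k_P$ directly out of $\fat p$ and then to identify it with a self-adjoint extension of $\fat k$ on $\mc P$, following the structure isolated in the worked example above: the translation group $U_t=e^{it\fat x}$ must move the spectral projections of $\fat p$ along the flow of $\dot p=f(p)$, and this flow is conjugate to an ordinary translation via the change of variables $K(p):=\int_0^p dp'/f(p')$. Concretely, I would first pass to the spectral representation of $\fat p$, legitimate because $\sigma(\fat p)$ is simple, so that $\mc P\cong\mc L^2(\mb R,\mu)$ with $\fat p$ acting as multiplication by $p$. Compressing the translation condition to $\mc P$ — using $E_{\fat p}(\Omega')\leq P$ and $Q\leq\mb I-P$, so that $PQP=0$ — gives $P\,U_tE_{\fat p}(\Omega)U_{-t}\,P=E_{\fat p}(\tau_t(\Omega))$ for a well-defined assignment $\tau_t$ on Borel sets. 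I would then check that $\{\tau_t\}_t$ is a measurable one-parameter semigroup, that each $\tau_t$ is induced by a (possibly only partially defined) point map of $\mb R$, and that, differentiating at $t=0$ on the analytic domain $\mc D$ and using $[\fat x,\fat p]\psi=if(\fat p)\psi$, the generator of that flow is $p\mapsto -f(p)$. The hypotheses on $f$ force $f\geq 1$ everywhere (smoothness and evenness give $f'(0)=0$; convexity on $\mb R^+$ and evenness then give global convexity, whence $f\geq f(0)=1$), so $K$ is well defined, smooth, odd, strictly increasing, with $K'(0)=1$ and range $(-k_{\max},k_{\max})$, and it conjugates the flow to translation: $K\circ\tau_t=K(\cdot)-t$ on its domain.

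Next I would define $\fat k_P:=K(\fat p)$ through the functional calculus for $\fat p$ on $\mc P$, with the orientation of $K$ fixed so that the translation direction matches that of $\fat k$. Then $\fat k_P$ is self-adjoint on $\mc P$, with $\sigma(\fat k_P)=[-k_{\max},k_{\max}]$ — which is $\mb R$ precisely when $k_{\max}=\infty$ — and, since $\fat p=K^{-1}(\fat k_P)$ is a Borel function of $\fat k_P$, the operators $\fat p$ and $\fat k_P$ strongly commute; this already settles the second assertion of the theorem. From $K\circ\tau_t=K(\cdot)-t$ one also gets the shifted Weyl relation $P\,U_tE_{\fat k_P}(\Omega)U_{-t}\,P=E_{\fat k_P}(\Omega+t)$, the exact analogue — modulo the compression by $P$ and the possible finiteness of the spectrum — of the standard relation $U_tE_{\fat k}(\Omega)U_{-t}=E_{\fat k}(\Omega+t)$ on $\mc H$.

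It remains to show that $\fat k_P$ extends the symmetric operator obtained by restricting $\fat k$ to $\mc P$, and this is where condition~\eqref{eq:comm}, and crucially its cubic order $\delta\in O(\epsilon^3)$, enters. The point is that $\fat k$ and $\fat k_P=K(\fat p)$ differ at momentum scale $\epsilon$ only by $|K(\epsilon)-\epsilon|=|\int_0^\epsilon(f(p')^{-1}-1)\,dp'|=O(\epsilon^3)$, so \eqref{eq:comm} forces their low-lying spectral subspaces to coincide: identifying $\epsilon'$ with $K^{-1}(\epsilon)$ up to the permitted error yields $\|E_{\fat k}([-\epsilon,\epsilon])-E_{\fat k_P}([-\epsilon,\epsilon])\|\in O(\epsilon^3)$. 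I would then run a telescoping argument: partition $[-a,a]$ into $N$ subintervals of length $\epsilon=a/N$, carry each into place with $U_t$ (exactly for $E_{\fat k}$, and through the shifted Weyl relation for $E_{\fat k_P}$, the $Q$-terms dropping out since they live on $\mc P^\perp$, where $E_{\fat k_P}$ vanishes), and sum the $N$ errors to obtain a total discrepancy of order $N\epsilon^3=a^3/N^2\to 0$. Hence $E_{\fat k}([-a,a])$ compressed to $\mc P$ equals $E_{\fat k_P}([-a,a])$ exactly; propagating this along $U_t$ gives $E_{\fat k}(\Omega)$ restricted to $\mc P$ equal to $E_{\fat k_P}(\Omega)$ for every bounded Borel $\Omega$, and therefore $\fat k_P$ is a self-adjoint extension on $\mc P$ of $\fat k$ restricted to $\mc P$, strongly commuting with $\fat p$ as already established.

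The step I expect to be the main obstacle is the rigorous extraction of the point flow $\tau_t$ from the translation condition: proving measurability and the semigroup law, showing that the set maps really arise from partially defined point maps, and legitimately differentiating $U_tE_{\fat p}(\cdot)U_{-t}$ at $t=0$ against the commutator on $\mc D$ while keeping control of the various domains (in particular ensuring that $\mc D$ is still large enough inside $\mc P$). The telescoping estimate, although conceptually transparent, also demands careful bookkeeping of the accumulated errors and of the $Q$-contributions, and it is exactly the cubic bound in \eqref{eq:comm} that makes the sum vanish in the limit — a weaker hypothesis such as $O(\epsilon^2)$ or $O(\epsilon)$ would not close the argument.
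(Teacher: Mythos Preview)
Your approach is genuinely different from the paper's, and in a sense runs in reverse: you first build $\fat k_P:=K(\fat p)$ (so that strong commutation is automatic) and then try to identify it with the restriction of $\fat k$ to $\mc P$ via a telescoping estimate. The paper instead proves $[E_{\fat k}(\Omega),E_{\fat p}(\Omega')]=0$ directly: it partitions \emph{both} $\Omega$ and $\Omega'$ into intervals of width $\sim\epsilon$, translates each $E_{\fat k}$-piece to the origin via $U_\alpha$, replaces it there by an $E_{\fat p}$-piece at cost $\delta$ (the first condition), and translates back --- which by the second condition yields an operator of the form $E_{\fat p}(\cdot)+Q_\alpha$ with $Q_\alpha\leq\mb I-P$. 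Such an operator commutes exactly with every $E_{\fat p}(\Omega^\beta_{\epsilon'})$, so each of the $O(|\Omega||\Omega'|/\epsilon^2)$ summands contributes only $2\delta$, and the total vanishes since $\delta\in O(\epsilon^3)$. Once all spectral projections commute, $P$ is a function of $\fat k$, hence $P=E_{\fat k}(I)$ for some Borel set $I$, $\mc P$ reduces $\fat k$, and $\fat k_P$ is simply the restriction; the explicit functional relation $\fat k_P=g(\fat p)$ that you take as your starting point is only established afterwards, in a separate Proposition. What the paper's route buys is that it never needs to extract a point flow $\tau_t$, prove a semigroup law, or differentiate anything --- it uses the second hypothesis only as a one-shot structural decomposition. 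This matters concretely for your plan: iterating the second condition gives $E_{\fat p}(\tau_{t+s}(\Omega))=E_{\fat p}(\tau_s(\tau_t(\Omega)))+PU_sQ_tU_{-s}P$, and the hypotheses do not force the last term to vanish, so the semigroup identity $\tau_{t+s}=\tau_s\circ\tau_t$ (on which your conjugacy $K\circ\tau_t=K(\cdot)-t$ rests) is not available for free. The obstacle you flag is therefore not just bookkeeping; the paper's direct commutator argument sidesteps it entirely.
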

    By \cite[Corollary 5.28]{Schmuedgen:2012} there then exists a dense linear subspace in $\mc P$ invariant under both operators $\fat p$ and the restricted $\fat k$, such that they commute on vectors from this domain.
    \begin{proof}
      The idea is to show that the spectral projections commute for any pair of finite Borel sets $\Omega,\Omega'\in\mc B(\mc H)$. These sets can be partitioned into finitely many intervals of diameter $\epsilon$ and $\epsilon'$.
      The spectral projections of $\fat k$ for these intervals are simply translates of $[-\epsilon',\epsilon]$ by the unitary group $U_t$ generated by the position operator $\fat x$.
      The idea now is to show that we can approximate the projections $U_{-t}E_{\fat p}([-\epsilon,\epsilon])U_{t}$ by spectral projections $E_p(\tilde\Omega)$ for some $\tilde\Omega\in\mc B(\mb R)$ which depends on $\epsilon,\epsilon'$ and $t$.
      Note that since $\mc B(\mb R)$ is generated by open and bounded intervals, and since spectral measures are countably additive the assertion holds for all measurable sets if its holds for bounded open intervals.

      Let $\Omega,\Omega'\in\mc B(\mb R)$ be bounded and open intervals. Choose any $\epsilon,\epsilon'>0$ and let $\Omega_\epsilon^\alpha$ and $\Omega_{\epsilon}^\beta$ be a cover of $\Omega$ and $\Omega'$ by disjoint open intervals $\Omega_\epsilon^\alpha:=(-\epsilon+\alpha,\epsilon+\alpha)$ and $\alpha,\beta$ are the corresponding indices for the shifts.
      By countable additivity it follows
      \begin{align*}
        [E_{\fat k}(\Omega),E_{\fat p}(\Omega')]  &=  \sum_{\alpha,\beta}[E_{\fat k}(\Omega_{\epsilon}^\alpha),E_{\fat p}(\Omega_{\epsilon}^\beta)].
      \end{align*}
      The unitary group $U_t$ generated by $\fat x$ acts as translations on $L^2(\mb R)$ when we view $\fat k$ as multiplication operator.
      Hence we have for the spectral projections $E_{\fat k}(\Omega)$ for some measurable set $\Omega\in\mc B(\mb R)$ that $U_tE_{\fat k}(\Omega)U_{-t}=E_{\fat k}(\Omega+t)$. I.e. for each index $\alpha$ we get $E_{\fat k}(\Omega_\epsilon^\alpha)=U_\alpha E_{\fat k}(\Omega_\epsilon)U_{-\alpha}$ where $\Omega_\epsilon:=\Omega_\epsilon^0$.

      Now choose $\epsilon<\epsilon_0$ and let $\epsilon'$ and $\delta$ such that $\|E_{\fat k}(\Omega_\epsilon)-E_{\fat p}(\Omega_{\epsilon'})\|<\delta$. Choose index sets $I,J$ such that $\Omega\subset\bigcup_{\alpha\in I}\Omega_\epsilon^\alpha=:\Omega_\epsilon^I$ and $\Omega'\subset\bigcup_{\beta\in J}\Omega_{\epsilon'}^\beta=:\Omega_{\epsilon'}^J$.
      It is enough to show that the commutators $[E_{\fat k}(\Omega_\epsilon^I),E_{\fat p}(\Omega_{\epsilon'}^J)]$ are small since this implies that this is also true for the smaller projections $E_{\fat k}(\Omega)$ and $E_{\fat p}(\Omega')$.

      By assumption we get for all such $\alpha$ and with the notation $\Omega_{\epsilon'}:=\Omega_{\epsilon'}^0$
      \begin{align*}
        \|E_{\fat k}(\Omega_\epsilon^\alpha)-U_\alpha E_{\fat p}(\Omega_{\epsilon'})U_{-\alpha}\|<\delta.
      \end{align*}
      Furthermore, we have that for each $\alpha\in I$ there exists a measureable set $\Omega_{\epsilon'}^\alpha$ and a projections $Q_\alpha\leq\mb I-P$ such that
      \begin{align*}
        U_\alpha E_{\fat p}(\Omega_{\epsilon'})U_{-\alpha}  &=  E_{\fat p}(\Omega_{\epsilon'}^\alpha)+Q_\alpha.
      \end{align*}
      Hence, we obtain the following estimate:
      \begin{align*}
        \|[E_{\fat k}(\Omega),E_{\fat p}(\Omega')]\|  &\leq  \sum_{\alpha,\beta}\|[E_{\fat k}(\Omega_{\epsilon}^\alpha),E_{\fat p}(\Omega_{\epsilon}^\beta)]\|\\
          &\leq \sum_{\alpha,\beta}\left (\|[U_\alpha E_{\fat p}(\Omega_{\epsilon'})U_{-\alpha},E_{\fat p}(\Omega_\epsilon^\beta)\|+2 \delta\right )\\
          &= \sum_{\alpha,\beta}\left [E_{\fat p}(\Omega_{\epsilon'}^\alpha)+Q_\alpha,E_{\fat p}(\Omega_\epsilon^\beta]\|+2 \delta\right )\\
          &=  \sum_{\alpha,\beta}2\delta =  2|J||I|\delta\leq\frac{4|\Omega||\Omega'|}{\epsilon^2}\delta.
      \end{align*}
      Since this estimate only depends on the partitioning of $\Omega$ and $\Omega'$ into small intervals and since we can choose this partitioning arbitrarily small, it follows that
      \begin{align*}
        \forall\Omega,\Omega'\in\mc B(\mb R)  &:  \|[E_{\fat k}(\Omega),E_{\fat p}(\Omega')]\|=0.
      \end{align*}
      In particular this implies that for all measurable sets $\Omega\in\mc B(\mb R)$ the projections $E_{\fat k}(\Omega)$ commute with the projection $P:\mc H\to\mc P$.
      By \cite[Theorem 75.1]{AkhiezerGlazman} this implies that $P$ is a function of $\fat k$, i.e. there exists a measurable function $\chi_P$ on $\mb R$ such that $P=\int_{\mb R}\chi_P(\lambda)E_{\fat k}(\mr d\lambda)$.
      Furthermore, $\chi_P$ is positive, $\chi_P^2=\chi_P$ and $\|\chi_P\|_\infty=1$, hence it is an indicator function of some measurable set $I\in\mc B(\mb R)$, and the projection $P$ coincides with $E_{\fat k}(I)$.
      The restriction of $\fat k$ to $\mc P$ is therefore given by
      \begin{align*}
        \fat k_P  &=  \int_{\mb R}\chi_P(\lambda)\lambda E_{\fat k}(\mr d\lambda).
      \end{align*}
    \end{proof}

    \begin{proposition}
      Given the standard position and momentum operators $\fat x$ and $\fat k$ on $\mc H=\mc L^2(\mb R)$ and given $(f,\fat p,\mc D(\fat p),\mc P)$ admissible, satisfying the assumptions in Theorem~\ref{thm:thm1}.
      Then there exists an interval $I\subseteq\mb R$ such that $\mc P=\mc L^2(I)$, and a function $p:I\to\mb R$ such that $\fat p=p(\fat k)$ on $\mc P$. The interval is determined by the function $f$ in the following sense:
      \begin{itemize}
        \item
          If the function $g(p)=\int_0^p\frac1{f(s)}\mr ds$ is bounded then
          $I=[-k_{max},k_{max}]$ and $k_{max}$ is given by
          \begin{align*}
            k_{max} &=  \int_0^\infty\frac1{f(p)}\mr dp.
          \end{align*}
        \item
          If $g$ is unbounded then $I=\mb R$.
      \end{itemize}
    \end{proposition}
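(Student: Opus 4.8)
The plan is to feed the conclusion of Theorem~\ref{thm:thm1} into standard spectral theory. Theorem~\ref{thm:thm1} gives that the projection $P$ onto $\mc P$ commutes with every spectral projection of $\fat k$, so by \cite[Theorem 75.1]{AkhiezerGlazman} there is a Borel set $I\subseteq\mb R$ with $P=E_{\fat k}(I)$; in the $k$-representation this identifies $\mc P$ with $\mc L^2(I)$ and $\fat k_P$ with multiplication by $k$. Theorem~\ref{thm:thm1} further gives that $\fat p$ strongly commutes with $\fat k_P$, and since $\fat k$ — hence its restriction $\fat k_P$ to the reducing subspace $\mc P$ — has simple spectrum, the multiplicity theory of self-adjoint operators (see e.g. \cite{Schmuedgen:2012,AkhiezerGlazman}) shows that every self-adjoint operator strongly commuting with $\fat k_P$ is a Borel function of $\fat k_P$. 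Hence there is a real Borel function $p$ on $I$ with $\fat p=p(\fat k_P)$; this already yields $\fat p=p(\fat k)$ on $\mc P$ modulo the regularity of $p$.

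Next I would locate $0$ inside $I$ and normalise $p$ there using \eqref{eq:comm}. Because $\fat p=p(\fat k_P)$, each spectral projection $E_{\fat p}([-\epsilon',\epsilon'])=E_{\fat k}(p^{-1}([-\epsilon',\epsilon']))$ lies in the abelian von Neumann algebra generated by $\fat k$, which in the $k$-picture consists of multiplication operators by indicator functions; for two such projections the operator norm of their difference is $\{0,1\}$-valued, equal to $0$ precisely when the symmetric difference of the underlying sets is Lebesgue-null. Assumption~\eqref{eq:comm} forces this norm to be $<1$, hence $=0$, so for every small $\epsilon$ one has $p^{-1}([-\epsilon',\epsilon'])=[-\epsilon,\epsilon]$ up to a null set. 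Letting $\epsilon\to0$ shows that a neighbourhood of $0$ lies in $I$, that $p(0)=0$, and that $p$ is continuous and locally injective at the origin.

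The heart of the argument is then to repeat, at the level of the restricted operators, the computation \eqref{xfp}--\eqref{fkeq}. On the dense domain $\mc D\subset\mc P$ of common analytic vectors we have $\fat x\,\mc D\subseteq\mc D\subseteq\mc D(\fat x)$, so $\fat x$ acts there as the restriction of $i\,d/dk$, while $\fat p$ acts as multiplication by $p(k)$; the relation $[\fat x,\fat p]\psi=if(\fat p)\psi$ therefore reads $i\,p'(k)\,\psi(k)=if(p(k))\,\psi(k)$ in the distributional sense, and since $\psi$ ranges over a dense set this is the identity $p'=f\circ p$ in $\mc L^1_{\mathrm{loc}}(I)$. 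A bootstrap using that $f$ is smooth (so $f\circ p$ is continuous once $p\in W^{1,1}_{\mathrm{loc}}$, then $C^1$, then $C^\infty$) upgrades $p$ to a classical solution of $p'(k)=f(p(k))$ with $p(0)=0$. Since $f$ is convex and even with $f(0)=1$ we have $f\ge1$ everywhere, so $p$ is strictly increasing and separating variables gives that its inverse is exactly $g(p)=\int_0^p ds/f(s)$. Uniqueness and maximality of solutions of this autonomous ODE then pin down $I$: if $g$ is bounded the maximal solution exists precisely on $(-k_{\max},k_{\max})$ with $k_{\max}=\int_0^\infty ds/f(s)$, using $f(-s)=f(s)$ for the left endpoint; if $g$ is unbounded the solution is global, i.e. $I=\mb R$. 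In both cases $\mc P=\mc L^2(I)$ and $\fat p=p(\fat k)$, as claimed.

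I expect the main obstacle to be this regularity bootstrap: the commutation relation is only assumed on a dense set of analytic vectors and a priori $p$ is merely measurable, so one must verify carefully that $\fat x$ genuinely acts as $i\,d/dk$ on $\mc D$, that the functional-calculus manipulation of the commutator is legitimate there, and that the resulting a.e. identity $p'=f\circ p$ promotes $p$ to a genuine $C^\infty$ solution of the ODE rather than only a weak one. A secondary technical point — and the one where the strong commutativity from Theorem~\ref{thm:thm1} is really used — is to confirm that $E_{\fat p}([-\epsilon',\epsilon'])$ lies in the von Neumann algebra generated by $\fat k$, so that the $\{0,1\}$-valued dichotomy for the distance between the two projections is available and \eqref{eq:comm} can do its job.
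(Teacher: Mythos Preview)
Your approach is correct but takes a different route from the paper's. Both proofs start from the strong commutation supplied by Theorem~\ref{thm:thm1} and then invoke the functional calculus for a self-adjoint operator with simple spectrum, but you and the authors go in \emph{opposite directions}: you use the simplicity of $\sigma(\fat k_P)$ to write $\fat p=p(\fat k_P)$ and then derive the nonlinear ODE $p'=f\circ p$, whereas the paper uses the (assumed) simplicity of $\sigma(\fat p)$ to write $\fat k_P=g(\fat p)$ and obtains from $[\fat x,\fat k]=i$ the equation $g'(\fat p)f(\fat p)=1$, i.e.\ $g'=1/f$. This inversion buys the authors exactly what you flagged as your main obstacle: since $f\ge1$ is smooth, $1/f$ is smooth and bounded, so $g(p)=\int_0^p ds/f(s)$ is automatically $C^\infty$ and globally Lipschitz, and no regularity bootstrap is needed; the interval $I$ then drops out immediately as the essential range of $g$. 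Your route is more hands-on---you must promote a merely measurable $p$ to a classical solution of an ODE whose right-hand side is unbounded---and you also invest extra work in exploiting \eqref{eq:comm} to pin down $p(0)=0$ and locate $0\in I$, a normalisation the paper leaves implicit. On the other hand, your argument makes the role of \eqref{eq:comm} and the $\{0,1\}$-dichotomy for projection distances explicit, and it produces the function $p$ (rather than its inverse $g$) directly, which is what the statement actually asks for; the paper only remarks at the very end that, by simplicity of $\sigma(\fat k)$, an inverse function $h$ with $\fat p=h(\fat k)$ exists.
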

    \begin{proof}
      By the previous theorem the operators $\fat k_P$ and $\fat p$ strongly commute on $\mc P$. Let $I\subset\mb R$ the measureable set with $E_{\fat k}(I)=P$ and $P:\mc H\to \mc P$ the projection onto $\mc P$.
      By assumption the spectrum of $\fat p$ is simple and
      therefore the spectral projections $E_{\fat k}(\Omega)$ with $\Omega\subset I$ are bounded functions of $\fat p$ \cite[Theorem VII.5]{ReSi:1970}. Since for any such $\Omega$ $E_{\fat k}(\Omega)$ is a projection there exists a measureable set $\Theta\subset \mb R$ such that $E_{\fat k}(\Omega)=E_{\fat p}(\Theta)$.
      Therefore there exists an almost everywhere finite, measureable, real-valued function $g$ on $\mb R$ such that
      \begin{align*}
        P\fat kP  &=  \int_{\mb R}g(\lambda)E_{\fat p}(\mr d\lambda),
      \end{align*}
      i.e. $P\fat k P=g(\fat p)$.
      Now let $\psi\in\mc D$. Then, by assumption
      \begin{align*}
        [\fat x,\fat p]\psi &=  if(\fat p)\psi.
      \end{align*}
      Then
      \begin{align*}
        [\fat x,g(\fat p)]\psi  &=  ig'(\fat p)f(\fat p)\psi.
      \end{align*}
      But, since $[\fat x,\fat k]\psi=i\psi$ we must have that $g'=\frac1f$, therefore
      \begin{align*}
        g(p)  &=  \int_{0}^p \frac1{f(s)}\mr ds.
      \end{align*}
      Since the spectrum $\sigma(g(\fat p))$ of $g(\fat p)$ is the essential range of the function $g$ we see that the spectrum of $P\fat kP$ is the interval $I=[-k_{max},k_{max}]$, if $g$ is bounded, and $\mb R$ if $g$ is unbounded.
      Hence the subspace $\mc P$ is isomorphic to $\mc L^2(I)$.

      Conversely, since $\fat k$ has simple spectrum, there exists a function $h$ such that $\fat p = h(\fat k)$, and this function is necessarily unbounded.
    \end{proof}

	% !TEX root = Letter-GUPWPDR1-GP.tex
\section{Proof of Theorem \ref{thm:convex}}
\label{robbound}
We consider observables $\x$ and $\p$ that obey a modified commutation relation as described is Section \ref{heisenalg}. Since we are interested in uncertainty relations that are optimal for all states, we need to consider mixed states as well. In order to focus on pure states one would need to first show that for all mixed states there exists a pure state that is ``more optimal'', a notion that we will make precise in the following.\\
We denote by $\Omega$ the set of mixed states $\rho$, given by the density operators from $\mathcal{B}(\mathcal{P})$. When considering variances, the corresponding uncertainty region is given by
\begin{align}
\mathcal{U}=\left\{\left(\Delta_\rho \x , \Delta_\rho \p\right)\in \mathbb{R}^2_+\bigr|\rho\in\Omega\right\}.
\end{align}
We can give a precise definition of the trade-off curve by introducing a partial ordering relation ``$<$'' (``$\leq$''), which is given by saying that $v<w$ ($v\leq w$), for $v,w\in\mathbb{R}^2$, if every component of $v$ is smaller (small or equal) than the corresponding component of $w$. The trade-off curve $\Gamma(\mc U)$, i.e. the desired optimal and state-independent uncertainty relation, is then given by all tuples from $\mc U$ that are minimal in $\mc U$ with respect to the above ordering ``$<$'', i.e.
\begin{align}
  \label{bounddef}
  \Gamma(\mc U)=\left\{ v\in\mathcal{U}\bigr|\nexists w\in\mathcal{U}: w<v \right\}.
\end{align}
Unfortunately, no general efficient method for computing this trade-off curve is known. However, in Theorem \ref{boundthm}, we circumvent this circumstance by first providing a lower bound on $\Gamma(\mc U)$ and then showing that this bound can be attained. For the first step we need the notion of a lower convex hull $\mathcal{U}_{lc}$:
This is obtained by first filling up $\mathcal{U}$ with all points that are more uncertain than, at least, some point from $\mathcal{U}$, and then taking the convex hull of this set, i.e.
\begin{align}
  \label{lchull}
  \mathcal{U}_{lc}=\operatorname{Conv}\left(\left\{v\in\mathbb{R}^2\bigr|\exists w\in\mathcal{U}: w\leq v\right\}\right).
\end{align}
Note that this does not add any additional extremal points other than those already contained in the convex hull of $\mc U$ (which are therefore already contained in $\mc U$).

\begin{figure}
	\centering
	\def\svgwidth{0.45\textwidth}
	\input{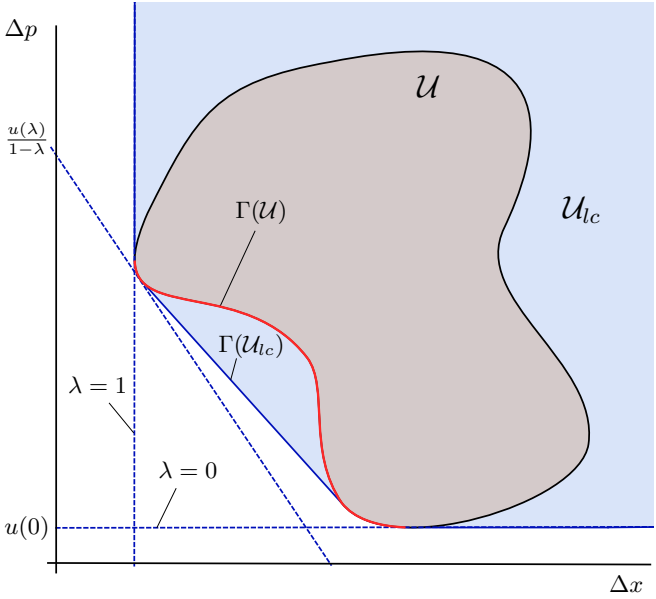}
	\caption{The lower convex hull $\mathcal{U}_{lc}$ (blue region) of some set $\mathcal{U}$ (grey region) is obtained by considering hyperplanes (dashed blue lines) for $\lambda\in(0,1)$. As in \eqref{linbound} the trade-off curve $\Gamma(\mathcal{U}_{lc})$ (solid blue line) of the convex hull can be parametrised by values $u(\lambda)$. The trade-off curve $\Gamma(\mathcal{U}_{lc})$ is never ``worse'' than the trade-off curve $\Gamma(\mathcal{U})$ (solid red line); both curves coincide if $\mathcal{U}$ is convex itself.}
	\label{fig:convex}
\end{figure}

\begin{theorem}
  \label{boundthm}
  Let $\x$, $\p$, $\Omega$, $\mc U$ and $\mc U_{lc}$ be given as described above and let $\mc U_{00}$ denote the set of attainable tuples of second moments, i.e.
  \begin{align}
  \label{f00}
    \mc U_{00}=\left\lbrace(\tr(\rho \x^2),\tr(\rho \p^2))|\rho\in\Omega\right\rbrace \ .
  \end{align}
  The notation indicates that $\mc U_{00}$ corresponds to the set obtained by restricting $\mc U$ to states with expectations $\langle \x\rangle=\langle \p\rangle=0$.
  Then
  \begin{align}\label{equalityU}
    \Gamma(\mathcal{U}_{lc})=\Gamma(\mc U)=\Gamma(\mc U_{00}) \ .
  \end{align}
\end{theorem}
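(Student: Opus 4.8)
The plan is to turn the problem into a one-parameter family of one-dimensional ground-state problems and to read off the relevant structure of $p(k)$ from the hypotheses on $f$. \emph{First}, since the second moment is affine in $\rho$ while the square of the first moment is convex, $\rho\mapsto\lambda\Delta_\rho\x+(1-\lambda)\Delta_\rho\p$ is concave on $\Omega$; hence for each $\lambda\in[0,1]$ the number $u(\lambda):=\inf_{\rho\in\Omega}\bigl(\lambda\Delta_\rho\x+(1-\lambda)\Delta_\rho\p\bigr)$ --- which by \eqref{linbound} parametrises the boundary of $\mc U_{lc}$ --- is already attained over pure states. Using $\Delta_\psi A=\min_{a\in\mb R}\langle\psi|(A-a)^2|\psi\rangle$ for $A\in\{\x,\p\}$ and swapping the two infima gives $u(\lambda)=\inf_{a,b}E_0\bigl(\lambda(\x-a)^2+(1-\lambda)(\p-b)^2\bigr)$, where $E_0(\cdot)$ denotes the bottom of the spectrum of the indicated operator on $\mc P$ (realised, as in Corollary~\ref{cor:method}, as a quadratic form with Dirichlet boundary conditions). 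Conjugation by the unitaries $e^{ia\fat k}$, which act as multiplication by $e^{iak}$ in the $k$-representation and so preserve $\mc P=\mc L^2(I)$ and the form domain $H^1_0(I)$, removes the $a$-dependence, leaving $u(\lambda)=\inf_{b\in\mb R}E_0\bigl(-\lambda\partial_k^2+(1-\lambda)(p(k)-b)^2\bigr)$ on $\mc L^2(I)$ with Dirichlet conditions at $\pm k_{\max}$.

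\emph{Second}, I would prove that this last infimum is attained at $b=0$; this is the only real obstacle. From $f(0)=1$, the evenness of $f$ and \eqref{fkeq} one gets that $p$ is odd, and from convexity of $f$ (which together with (i),(ii) also forces $f\geq1>0$ and $f'\geq0$ on $\mb R^{+}$, whence $p''=f'(p)f(p)\geq0$) one gets that $p$ is convex on $\mb R^{+}$. Consequently $p(k)^2$ is even and non-decreasing on $[0,k_{\max}]$, i.e. it coincides with its own symmetric-increasing rearrangement, and $q:=p^{-1}$ is odd, increasing and concave on $\mb R^{+}$ with $q(0)=0$. The claim then reduces to the pointwise bound $\widetilde{(p-b)^2}\geq p^2$ for the symmetric-increasing rearrangement, which is in turn equivalent to $|\{\,(p(k)-b)^2<t\,\}|\leq|\{\,p(k)^2<t\,\}|$ for all $t>0$; writing $u=\sqrt t$ this reads $q(b+u)-q(b-u)\leq2q(u)$, and splitting $[b-u,b+u]$ at $b$ and using that increments of a concave function over intervals of fixed length decrease to the right handles $0\leq u\leq b$, while for $u\geq b$ it is midpoint concavity of $q$ at $u$ applied to the symmetric points $u\pm b\geq0$. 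Feeding the pointwise bound into a standard rearrangement inequality for the potential term and the P\'olya--Szeg\H{o} inequality for $\lambda\|\psi'\|^2$ shows $E_0\bigl(-\lambda\partial_k^2+(1-\lambda)(p-b)^2\bigr)\geq E_0\bigl(-\lambda\partial_k^2+(1-\lambda)p^2\bigr)$ for every $b$; since the right-hand side is a Dirichlet Schr\"odinger operator on a bounded interval with positive potential, its ground-state energy is attained, and so $u(\lambda)=E_0\bigl(-\lambda\partial_k^2+(1-\lambda)p(k)^2\bigr)$. I expect this to be the hard part because it is precisely where convexity of $f$ is used, and I do not see a way around rearrangements.

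\emph{Finally}, the minimiser above is the ground state $\psi_\lambda$ of $H_\lambda=\lambda\x^2+(1-\lambda)\p^2$; being non-degenerate and positive it is even, so $\langle\psi_\lambda|\x|\psi_\lambda\rangle=0$ and $\langle\psi_\lambda|\p|\psi_\lambda\rangle=\int p(k)|\psi_\lambda(k)|^2\,dk=0$. Hence $\bigl(\Delta_{\psi_\lambda}\x,\Delta_{\psi_\lambda}\p\bigr)=\bigl(\tr(|\psi_\lambda\rangle\langle\psi_\lambda|\x^2),\tr(|\psi_\lambda\rangle\langle\psi_\lambda|\p^2)\bigr)\in\mc U_{00}$, and it lies on the supporting line $\{\lambda\Delta\x+(1-\lambda)\Delta\p=u(\lambda)\}$ of $\mc U_{lc}$; since $\mc U_{00}$ is convex (an affine image of $\Omega$) and so meets every supporting line of $\mc U_{lc}$ at the minimal level, $\Gamma(\mc U_{lc})\subseteq\mc U_{00}$ (the flat pieces and the endpoints $\lambda=0,1$ --- the plane-wave and particle-in-a-box limits --- needing only routine care). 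Combined with $\mc U_{00}\subseteq\mc U\subseteq\mc U_{lc}$ --- the first inclusion because symmetrising any state under the parity $\Pi$ (which preserves $\mc P$) yields a state with vanishing $\langle\x\rangle,\langle\p\rangle$ and unchanged second moments, the second by definition --- and with the fact that every point of the closed convex, upward-closed set $\mc U_{lc}$ is weakly dominated by a point of $\Gamma(\mc U_{lc})\subseteq\mc U_{00}$, one gets $\Gamma(\mc U_{lc})=\Gamma(\mc U)=\Gamma(\mc U_{00})$, which is \eqref{equalityU}; statements 1--3 of Theorem~\ref{thm:convex} then follow via \eqref{h01} as in the main text.
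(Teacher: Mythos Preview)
Your proof is correct and follows the same overall skeleton as the paper: both reduce $u(\lambda)$ to a ground-state problem via $\Delta_\rho A=\min_a\langle(A-a)^2\rangle$, remove the position shift by conjugating with $e^{ia\fat k}$, and then argue that the remaining momentum shift is optimal at zero, whence the optimisers have vanishing first moments and the three trade-off curves coincide.

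The genuine difference is in the ``momentum-shift'' step. The paper does \emph{not} use rearrangements. Instead it replaces $V_\alpha(k)=(1-\lambda)(p(k)-\alpha)^2$ by its convex lower envelope $\check V_\alpha$ in $k$, invokes a result of Simon (Corollary~13.6 in \cite{simon}) stating that the ground-state energy depends convexly on a parameter entering the potential convexly, and then uses the reflection symmetry $\check V_\alpha(-k)=\check V_{-\alpha}(k)$ to conclude that $\alpha\mapsto\check E_\alpha$ is even and convex, hence minimal at $\alpha=0$; at that point $\check V_0=V_0$ because $p(k)^2$ is already convex, closing the chain $E_\alpha\geq\check E_\alpha\geq\check E_0=E_0$. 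Your route is more direct: you compare level sets via the concavity of $q=p^{-1}$ on $\mb R^+$ to get $\widetilde{(p-b)^2}\geq p^2$, and then combine Hardy--Littlewood for the potential term with P\'olya--Szeg\H{o} for the kinetic term to obtain $E_b\geq E_0$ without any detour through a convex envelope or an external monotonicity/convexity theorem. What the paper's approach buys is that it avoids rearrangement machinery altogether (at the price of citing Simon and of the extra envelope step); what yours buys is a self-contained, elementary argument that makes the role of the convexity hypothesis on $f$ completely transparent through the single inequality $q(b+u)-q(b-u)\leq 2q(u)$. Your parity-symmetrisation justification of $\mc U_{00}\subseteq\mc U$ is also a small addition: the paper asserts this inclusion without argument.
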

\begin{proof}
For any two sets $\mc V_1$ and $\mc V_2$ we introduce a partial ordering relation ``$\preceq$'' on the corresponding curves $\Gamma(\mc V_1)$ and $\Gamma(\mc V_2)$, respectively, by saying that $\Gamma(\mc V_1) \preceq \Gamma(\mc V_2)$
if and only if for all points $v_2\in\Gamma(\mc V_2)$ there is a point $v_1\in\Gamma(\mc V_1)$ such that $v_1\leq v_2$.

By construction we know that $\mathcal{U}_{lc}$ is convex and $\Gamma(\mathcal{U}_{lc})\preceq\Gamma(\mc U)$.
Note that $\mc U_{00}$ is also convex since it is obtained as the range of an affine map of the convex set $\Omega$. Additionally, points in $\Gamma(\mc U_{00})$ are always in $\mc U$ and hence
\begin{align}
  \Gamma(\mathcal{U}_{lc})\preceq\Gamma(\mc U)\preceq\Gamma(\mc U_{00}) \ .
\end{align}
In the following we will prove that in fact $\Gamma(\mathcal{U}_{lc})=\Gamma(\mathcal{U}_{00})$, which immediately implies the desired statement \eqref{equalityU}.

Since $\mathcal{U}_{lc}$ is convex, it is fully characterised by the intersection of its supporting halfspaces (see for example \cite{convexopti}), i.e. there is a function $u(\lambda)$ such that
\begin{align}\label{dual}
  \mathcal{U}_{lc}=\left\{(v_1,v_2)\in\mathbb{R}_+^2|\forall \lambda \in \mathbb{R}:\lambda v_1 +(1-\lambda)v_2\geq u(\lambda) \right\}
\end{align}
The boundary of $\mathcal{U}_{lc}$ is the set of points which have an intersection with a supporting hyperplane, i.e. $\lambda v_1 +(1-\lambda)v_2 = u(\lambda)$.
Moreover, $\Gamma(\mathcal{U}_{lc})$ consists of points on the boundary, hence that attain equality for $\lambda\in(0,1)$.
Conversely, $u(\lambda)$ can be obtained by minimizing $\lambda v_1+(1-\lambda)v_2$ over all points in $\mathcal{U}_{lc}$, for a fixed $\lambda$.
However, for $\lambda$ in $[0,1]$, this minimization of a linear functional will attain its minimum also on extremal points of $\mathcal{U}_{lc}$, which are contained in the boundary of $\mathcal{U}$. We therefore can write
\begin{align}
  \label{min1}
  u(\lambda)=\inf_{\rho\in\Omega} \lambda \Delta_\rho\x+(1-\lambda) \Delta_\rho\p\ .
\end{align}
At this stage we can reformulate the variance of an observable $A$ as
\begin{align}
  \Delta_\rho A = \min_{a\in\mathbb{R}} \langle (A-a)^2\rangle_\rho\ ,
\end{align}
such that the r.h.s of \eqref{min1} turns into
\begin{align}
  \label{min2app}
  \min_{\alpha,\eta\in\mathbb{R}} \inf_{\rho\in\Omega}\lambda \langle(\x-\eta)^2\rangle_\rho +(1-\lambda) \langle(\p-\alpha)^2\rangle_\rho\ .
\end{align}
The expectation of $\x$ can be shifted by multiplying with $\exp( i \eta \k)$. As $\p$ commutes with $\k$, (see section \ref{heisenalg}) this procedure will not affect the variance $\Delta \p$, such that we can always set $\eta=0$ in the following.
If we now represent $\p$ as a function of the coordinate $k$ and $\x$ as $i\partial_k$, the minimization over $\rho$ in \eqref{min2app} corresponds to finding the ground state energy, $E_\alpha$, of the Schrödinger operator
\begin{align}
H_\alpha:=-\lambda \partial^2_k +V_\alpha(k)
\end{align}
with potential $V_\alpha(k)=(1-\lambda) (p(k)-\alpha)^2$, i.e.
\begin{align}
u(\lambda)=\min_\alpha E_\alpha.
\end{align}

As $V_\alpha(k)$ is positive and thus bounded from below for every $\alpha$, there is a unique function $\check{V}_\alpha(k)$ which gives the best convex approximation to $V_\alpha(k)$ from below, i.e the super graph of  $\check{V}_\alpha(k)$ is the convex hull of the super graph of $V_\alpha(k)$. If needed, $\check{V}_\alpha(k)$ can be obtained by Legendre transforming $V_\alpha(k)$ twice. Now, for all states $\rho$, we have $\langle V_\alpha(k) \rangle_\rho\geq\langle \check{V}_\alpha(k) \rangle_\rho$ and hence we can lower bound $E_\alpha $ by $\check{E}_\alpha$, which is the ground state energy of the Schrödinger operator $-\lambda \partial^2_k +\check{V}_\alpha(k)$.

Note that $\check{V}_\alpha(k)$ is a convex function in $\alpha$, because $V_\alpha(k)$ is convex in $\alpha$. We can therefore employ Corollary 13.6 from \cite{simon} to show that $\check{E}_\alpha$ is a convex function of $\alpha$, too.

Moreover, $\check{V}_\alpha$ inherits the symmetry $\check{V}_\alpha(-k)=\check{V}_{-\alpha}(k)$ from $V_\alpha(k)$, which can be implemented by a unitary automorphism on $\Omega$. This shows the symmetry $\check{E}_\alpha=\check{E}_{-\alpha}$, which directly implies that $\check{E}_\alpha$ becomes minimal for $\alpha=0$. But here we have that $\check{V}_0(k)=V_0(k)$, because $V_0(k)$ is already a convex function, which yields
\begin{align}
\label{min3}
\check{E}_0=E_0=u(\lambda).
\end{align}

We thus know that, for $\lambda\in(0,1)$, the extremal points of $\mc U_{lc}$ (which are in $\Gamma(\mc U_{lc})$) have zero expectation in $\x$ and $\p$, i.e. they lie within $\mathcal{U}_{00}$. Since $\Gamma(\mc U_{lc})\preceq \Gamma(\mc U_{00})$ we even know that on these points the boundaries $\Gamma(\mc U_{lc})$ and $\Gamma(\mc U_{00})$ coincide. But as $\mc U_{lc}$ and $\mc U_{00}$ are both convex, this implies that $\Gamma(\mathcal{U}_{lc})=\Gamma(\mathcal{U}_{00})$.
\end{proof}

\section{A state-independent but not so optimal bound}
\label{nonoptbound}

Assume a modification $f(p)$ of the Heisenberg algebra with Taylor expansion $f(p)=1+\sum_{n=1}^\infty a_np^{2n}$ and $a_n\geq0$ for all $n\in\mb N$.
This implies that $f$ is convex, monotonously increasing for $p\geq 0$, smooth and symmetric around the origin with $f(0)=1$, hence fulfilling our assumptions (i-iii) in the main text. If we set
\begin{align}
g(p):=f(-\sqrt{|p|} )\ ,
\end{align}
we can see that
\begin{align}
g(p^2)=f(-\sqrt{|p^2|} )=f(-|p|)=f(|p|)=f(p)\ . \label{squared}
\end{align}
Now, restricted to $p>0$, $g(p)$ arises as a concatenation of convex functions, thus $g$ is also convex in this parameter range.
Inserting \eqref{squared} into the Robertson Kennard relation \eqref{robxp}, and using  Jensen's inequality together with the convexity of $g$, we get
\begin{align}
\Delta\x\Delta\p
&\geq \frac14|\langle f(\p)\rangle|^2\geq\frac14|g(\langle \p^2 \rangle)|^2\nonumber \ .
\end{align}
Now we can substitute $\langle p^2 \rangle=\Delta\p +\langle \p\rangle^2$ and use the properties of $g$ as well as the convexity of the absolute square, to arrive after a simple calculation at
\begin{align*}
\Delta\x\Delta\p  &\geq \frac14|g(\Delta\p+\langle \p\rangle^2)|^2\\
  &\geq \frac14|g(\Delta \p)|^2+\frac14|g(\langle \p\rangle^2)|^2-1\ .
\end{align*}
Here the state-dependent term $\frac14|g(\langle \p\rangle^2)|^2$ is greater than or equal to one, such that we can conclude the state-independent bound
\begin{align}
\Delta\x\Delta\p\geq\frac14g(\Delta \p)^2 \ .
\end{align}
However this bound is not optimal for general modifications. Examples for this can be seen in Fig.~\ref{ucrbounds}.

\end{appendix}
\bibliography{GEUP}

\begin{thebibliography}{10}

\bibitem{Amati:1988tn}
D.~Amati, M.~Ciafaloni, and G.~Veneziano.
\newblock Can space-time be probed below the string size?
\newblock {\em Physics Letters B}, 216:41, 1989.

\bibitem{garay1995quantum}
L.~J. Garay.
\newblock Quantum gravity and minimum length.
\newblock {\em International Journal of Modern Physics A}, 10(02):145--165,
  1995.
\newblock {arXiv}: gr-qc/9403008.

\bibitem{kempf1995hilbert}
A.~Kempf, G.~Mangano, and R.~B. Mann.
\newblock {Hilbert} space representation of the minimal length uncertainty
  relation.
\newblock {\em Physical Review D}, 52(2):1108, 1995.
\newblock {arXiv}: hep-th/9412167.

\bibitem{hossenfelder2013minimal}
S.~Hossenfelder.
\newblock Minimal length scale scenarios for quantum gravity.
\newblock {\em Living Reviews in Relativity}, 16(2):90, 2013.
\newblock {arXiv}: 1203.6191.

\bibitem{Konishi:1989wk}
K.~Konishi, G.~Paffuti, and P.~Provero.
\newblock Minimum physical length and the generalized uncertainty principle in
  string theory.
\newblock {\em Physics Letters B}, 234:276, 1990.

\bibitem{Kempf:1993bq}
A.~Kempf.
\newblock {Uncertainty relation in quantum mechanics with quantum group
  symmetry}.
\newblock {\em Journal of Mathematical Physics}, 35:4483--4496, 1994.
\newblock {arXiv}: hep-th/9311147.

\bibitem{connes}
A.~Connes.
\newblock {\em Noncommutative Geometry}.
\newblock Academic Press, 1995.

\bibitem{Moyal:1949sk}
J.~E. Moyal.
\newblock {Quantum mechanics as a statistical theory}.
\newblock {\em Mathematical Proceedings of the Cambridge Philosophical
  Society}, 45:99--124, 1949.

\bibitem{Groenewold:1946kp}
H.~J. Groenewold.
\newblock On the principles of elementary quantum mechanics.
\newblock {\em Physica}, 12:405--460, 1946.

\bibitem{Doplicher:1994tu}
S.~Doplicher, K.~Fredenhagen, and J.~E. Roberts.
\newblock {The Quantum structure of space-time at the {Planck} scale and
  quantum fields}.
\newblock {\em Communications in Mathematical Physics}, 172:187--220, 1995.
\newblock {arXiv}: hep-th/0303037.

\bibitem{Mangano:2015pha}
G.~Mangano, F.~Lizzi, and A.~Porzio.
\newblock Inconstant {Planck}'s constant.
\newblock {\em International Journal of Modern Physics A}, 30(34):1550209,
  2015.
\newblock {arXiv}: 1509.02107.

\bibitem{Pikovski:2011zk}
I.~Pikovski, M.~R. Vanner, M.~Aspelmeyer, M.~S. Kim, and C.~Brukner.
\newblock {Probing {Planck}-scale physics with quantum optics}.
\newblock {\em Nature Physics}, 8:393--397, 2012.
\newblock {arXiv}: 1111.1979.

\bibitem{Wilde}
M.~Wilde.
\newblock {\em Quantum Information Theory}.
\newblock Cambridge University Press, 2013.
\newblock {arXiv}: 1106.1445.

\bibitem{Goold:2015}
J.~Goold, M.~Huber, A.~Riera, L.~del Rio, and P.~Skrzypczyk.
\newblock The role of quantum information in thermodynamics --- a topical
  review.
\newblock {\em Journal of Physics A: Mathematical and Theoretical},
  49(14):143001, 2016.
\newblock {arXiv}: 1505.07835v2.

\bibitem{carroll2016entropy}
S.~M. Carroll and G.~N. Remmen.
\newblock What is the entropy in entropic gravity?
\newblock {\em Physical Review D}, 93:124052, 2016.
\newblock {arXiv}: 1601.07558.

\bibitem{Furrer}
F.~Furrer, T.~Franz, M.~Berta, A.~Leverrier, V.~B. Scholz, M.~Tomamichel, and
  R.~F. Werner.
\newblock Continuous variable quantum key distribution: Finite-key analysis of
  composable security against coherent attacks.
\newblock {\em Physical Review Letters}, 109:100502, 2012.
\newblock {arXiv}: 1112.2179v3.

\bibitem{dammeier2015uncertainty}
L.~Dammeier, R.~Schwonnek, and R.~F. Werner.
\newblock Uncertainty relations for angular momentum.
\newblock {\em New Journal of Physics}, 17(9):093046, 2015.
\newblock {arXiv}: 1505.00049.

\bibitem{Busch:2013vba}
P.~Busch, P.~Lahti, and R.~F. Werner.
\newblock Measurement uncertainty relations.
\newblock {\em Journal of Mathematical Physics}, 55:042111, 2014.
\newblock {arXiv}: 1312.4392.

\bibitem{abdelkhalek2015optimality}
K.~Abdelkhalek, R.~Schwonnek, H.~Maassen, F.~Furrer, J.~Duhme, P.~Raynal, B.~G.
  Englert, and R.~F. Werner.
\newblock Optimality of entropic uncertainty relations.
\newblock {\em International Journal of Quantum Information}, 13(06):1550045,
  2015.
\newblock {arXiv}: 1509.00398.

\bibitem{coles2015entropic}
P.~J. Coles, M.~Berta, M.~Tomamichel, and S.~Wehner.
\newblock Entropic uncertainty relations and their applications.
\newblock 2015.
\newblock {arXiv}: 1511.04857.

\bibitem{WehnerWinter}
S.~Wehner and A.~Winter.
\newblock Entropic uncertainty relations -- a survey.
\newblock {\em New Journal of Physics}, 12(2):025009, 2010.
\newblock {arXiv}: 0907.3704.

\bibitem{Pye:2015tta}
J.~Pye, W.~Donnelly, and A.~Kempf.
\newblock {Locality and entanglement in bandlimited quantum field theory}.
\newblock {\em Physical Review D}, 92(10):105022, 2015.
\newblock {arXiv}: 1508.05953.

\bibitem{Pareto}
Altannar Chinchuluun, Panos~M. Pardalos, Athanasios Migdalas, and Leonidas
  Pitsoulis.
\newblock {\em Pareto Optimality, Game Theory And Equilibria}.
\newblock Springer New York, 2008.

\bibitem{Brout}
R.~Brout, Cl. Gabriel, M.~Lubo, and Ph. Spindel.
\newblock Minimal length uncertainty principle and the trans-planckian problem
  of black hole physics.
\newblock {\em Physical Review D}, 59(4), 1999.

\bibitem{Werner1990}
R.~F. Werner.
\newblock Dilations of symmetric operators shifted by a unitary group.
\newblock {\em Journal of Functional Analysis}, 92(1):166--176, 1990.

\bibitem{Robertson}
H.~P. Robertson.
\newblock The uncertainty principle.
\newblock {\em Physical Review}, 34:163--164, 1929.

\bibitem{Kennard}
E.~H. Kennard.
\newblock Zur {Q}uantenmechanik einfacher {B}ewegungstypen.
\newblock {\em Zeitschrift f{\"u}r Physik}, 44:326--352, 1927.

\bibitem{Bojowald:2011jd}
M.~Bojowald and A.~Kempf.
\newblock {Generalized uncertainty principles and localization of a particle in
  discrete space}.
\newblock {\em Physical Review D}, 86:085017, 2012.
\newblock {arXiv}: 1112.0994.

\bibitem{deutsch1983uncertainty}
D.~Deutsch.
\newblock Uncertainty in quantum measurements.
\newblock {\em Physical Review Letters}, 50(9):631, 1983.

\bibitem{spindel}
S.~Detournay, C.~Gabriel, and P.~Spindel.
\newblock About maximally localized states in quantum mechanics.
\newblock {\em Physical Review D}, 66:125004, 2002.
\newblock {arXiv}: hep-th/0210128.

\bibitem{dorsch}
G.~C. Dorsch and J.~A. Noguera.
\newblock Maximally localized states in quantum mechanics with a modified
  commutator relation to all orders.
\newblock {\em International Journal of Modern Physics A}, 27(21):1250113,
  2012.
\newblock {arXiv}: 1106.2737.

\bibitem{convexopti}
S.~Boyd and L.~Vandenberghe.
\newblock {\em Convex Optimization}.
\newblock Cambridge University Press, 2004.

\bibitem{srw}
R.~Schwonnek, D.~Reeb, and R.~F. Werner.
\newblock Measurement uncertainty for finite quantum observables.
\newblock {\em Mathematics}, 4(2):21, 2016.
\newblock {arXiv}: 1604.00382.

\bibitem{ReSi:1970}
M.~Reed and B.~Simon.
\newblock {\em Functional Analysis}, volume~1 of {\em Methods of Modern
  Mathematical Physics}.
\newblock 1970.

\bibitem{teller}
G.~P{\"o}schl and E.~Teller.
\newblock Bemerkungen zur {Q}uantenmechanik des anharmonischen {O}szillators.
\newblock {\em Zeitschrift f{\"u}r Physik}, 83(3-4):143--151, 1933.

\bibitem{bialynicki2011entropic}
I.~Bia{\l}ynicki-Birula and {\L}.~Rudnicki.
\newblock Entropic uncertainty relations in quantum physics.
\newblock In {\em Statistical Complexity}, pages 1--34. 2011.

\bibitem{maassen1988generalized}
H.~Maassen and J.~Uffink.
\newblock Generalized entropic uncertainty relations.
\newblock {\em Physical Review Letters}, 60(12):1103, 1988.

\bibitem{shannon}
C.~E. Shannon.
\newblock A mathematical theory of communication.
\newblock {\em The Bell System Technical Journal}, 27(3):379--423, 1948.

\bibitem{renyi1961}
A.~R{\'e}nyi.
\newblock On measures of entropy and information.
\newblock In {\em Proceedings of the Fourth Berkeley Symposium on Mathematical
  Statistics and Probability, Volume 1: Contributions to the Theory of
  Statistics}, pages 547--561, Berkeley, Calif., 1961.

\bibitem{Jaynes63}
E.~T. Jaynes.
\newblock Information theory and statistical mechanics.
\newblock {\em Brandeis University Summer Institute Lectures in Theoretical
  Physics, Statistical Physics}, 3:181--218, 1963.

\bibitem{bialynicki1975uncertainty}
I.~Bia{\l}ynicki-Birula and J.~Mycielski.
\newblock Uncertainty relations for information entropy in wave mechanics.
\newblock {\em Communications in Mathematical Physics}, 44(2):129--132, 1975.

\bibitem{Bialynicki-B:1984}
Iwo Bia{\l}ynicki-Birula.
\newblock Entropic uncertainty relations.
\newblock {\em Physics Letters A}, 103(5):253--254, 1984.

\bibitem{beckner1975inequalities}
W.~Beckner.
\newblock Inequalities in {F}ourier analysis.
\newblock {\em Annals of Mathematics}, pages 159--182, 1975.

\bibitem{lieb}
E.~Lieb.
\newblock Gaussian kernels have only {G}aussian maximizers.
\newblock {\em Inventiones Mathematicae}, 102:179--208, 1990.

\bibitem{bekenstein1981universal}
J.~D. Bekenstein.
\newblock Universal upper bound on the entropy-to-energy ratio for bounded
  systems.
\newblock {\em Physical Review D}, 23(2):287, 1981.

\bibitem{hooft1993dimensional}
G.~'t~Hooft.
\newblock Dimensional reduction in quantum gravity.
\newblock 2009.
\newblock {arXiv}: gr-qc/9310026.

\bibitem{susskind1995world}
L.~Susskind.
\newblock The world as a hologram.
\newblock {\em Journal of Mathematical Physics}, 36(11):6377--6396, 1995.
\newblock {arXiv}: hep-th/9409089.

\bibitem{Maldacena:2015waa}
J.~Maldacena, S.~H. Shenker, and D.~Stanford.
\newblock A bound on chaos.
\newblock 2015.
\newblock {arXiv}: 1503.01409.

\bibitem{anandan1990geometry}
J.~Anandan and Y.~Aharonov.
\newblock Geometry of quantum evolution.
\newblock {\em Physical Review Letters}, 65(14):1697, 1990.

\bibitem{aharonov1961time}
Y.~Aharonov and D.~Bohm.
\newblock Time in the quantum theory and the uncertainty relation for time and
  energy.
\newblock {\em Physical Review}, 122(5):1649, 1961.

\bibitem{margolus1998maximum}
N.~Margolus and L.~B. Levitin.
\newblock The maximum speed of dynamical evolution.
\newblock {\em Physica D: Nonlinear Phenomena}, 120(1):188--195, 1998.
\newblock {arXiv}: quant-ph/9710043.

\bibitem{lloyd2000ultimate}
S.~Lloyd.
\newblock Ultimate physical limits to computation.
\newblock {\em Nature}, 406(6799):1047--1054, 2000.
\newblock {arXiv}: quant-ph/9908043.

\bibitem{brown2015complexity}
A.~R. Brown, D.~A. Roberts, L.~Susskind, B.~Swingle, and Y.~Zhao.
\newblock Complexity, action, and black holes.
\newblock {\em Physical Review D}, 93,:086006, 2015.
\newblock {arXiv}: 1512.04993.

\bibitem{lloyd2012quantum}
S.~Lloyd.
\newblock The quantum geometric limit.
\newblock 2012.
\newblock {arXiv}: 1206.6559.

\bibitem{Schneeloch}
J.~Schneeloch, C.~J. Broadbent, S.~P. Walborn, E.~G. Cavalcanti, and J.~C.
  Howell.
\newblock {Einstein-Podolsky-Rosen} steering inequalities from entropic
  uncertainty relations.
\newblock {\em Physical Review A}, 87:062103, 2013.
\newblock {arXiv}: 1303.7432.

\bibitem{Slepian1}
D.~Slepian and H.~O. Pollak.
\newblock Prolate spheroidal wave functions, {Fourier} analysis and uncertainty
  {I}.
\newblock {\em Bell System Technical Journal}, 40(1):43--63, 1961.

\bibitem{Slepian2}
H.~J. Landau and H.~O. Pollak.
\newblock Prolate spheroidal wave functions, {Fourier} analysis and uncertainty
  {II}.
\newblock {\em Bell System Technical Journal}, 40(1):65--84, 1961.

\bibitem{Slepian3}
H.~J. Landau and H.~O. Pollak.
\newblock Prolate spheroidal wave functions, {Fourier} analysis and uncertainty
  {III}.
\newblock {\em Bell System Technical Journal}, 41(4):1295--1336, 1962.

\bibitem{kempf-sampling}
A.~Kempf and R.~Martin.
\newblock Information theory, spectral geometry, and quantum gravity.
\newblock {\em Physical Review Letters}, 100:021304, 2008.
\newblock {arXiv}: 0708.0062.

\bibitem{AkhiezerGlazman}
N.~I. Akhiezer and I.~M. Glazman.
\newblock {\em Theory of linear operators in {H}ilbert space}.
\newblock Dover Publishing Inc., 1963.

\bibitem{Schmuedgen:2012}
K.~Schm{\"u}dgen.
\newblock {\em Unbounded self-adjoint operators on {H}ilbert space}.
\newblock Springer Science \& Business Media, 2012.

\bibitem{simon}
B.~Simon.
\newblock {\em Functional Integration and Quantum Physics}.
\newblock AMS, 1979.

\bibitem{Kempf:1997pb}
A.~Kempf.
\newblock On nonlocality, lattices and internal symmetries.
\newblock {\em Europhysics Letters}, 40:257--261, 1997.
\newblock {arXiv}: hep-th/9706213.

\bibitem{kempf1997minimal}
A.~Kempf and G.~Mangano.
\newblock Minimal length uncertainty relation and ultraviolet regularization.
\newblock {\em Physical Review D}, 55(12):7909, 1997.
\newblock {arXiv}: hep-th/9612084.

\bibitem{Hall}
M.~J. Hall.
\newblock Universal geometric approach to uncertainty, entropy, and
  information.
\newblock {\em Physical Review A}, 59(4), 1999.

\end{thebibliography}
\bibliographystyle{unsrt}
\end{document}